\long\def\symbolfootnote[#1]#2{\begingroup\def\thefootnote{\fnsymbol{footnote}}
\footnote[#1]{#2}\endgroup}
\begin{document}
% paper title
\title{Canonical Dual Method for Resource Allocation and Adaptive Modulation in Uplink SC-FDMA Systems}
\author{Ayaz Ahmad, \emph{Student Member, IEEE}, and Mohamad Assaad, \emph{Member, IEEE}
\thanks{The authors are with the Department of Telecommunications, Ecole
Sup\'{e}rieure d'Electricit\'{e} (Sup\'{e}lec), $91192$
Gif-sur-Yvette, France (e-mail: \{ayaz.ahmad,
mohamad.assaad\}@supelec.fr).}}
\maketitle
%\vspace{-12mm}
\begin{abstract}
In this paper, we study resource allocation and adaptive modulation in SC-FDMA which is adopted as the multiple access scheme for the uplink in the 3GPP-LTE standard. A sum-utility maximization (SUmax), and a joint adaptive modulation and sum-cost minimization (JAMSCmin) problems are considered. Unlike OFDMA, in addition to the restriction of allocating a sub-channel to one user at most, the multiple sub-channels allocated to a user in SC-FDMA should be consecutive as well. This renders the resource allocation problem prohibitively difficult and the standard optimization tools (e.g., Lagrange dual approach widely used for OFDMA, etc.) can not help towards its optimal solution. We propose a novel optimization framework for the solution of these problems that is inspired from the recently developed canonical duality theory. We first formulate the optimization problems as binary-integer programming problems and then transform these binary-integer programming problems into continuous space canonical dual problems that are concave maximization problems. Based on the solution of the continuous space dual problems, we derive resource allocation (joint with adaptive modulation for JAMSCmin) algorithms for both the problems which have polynomial complexities. We provide conditions under which the proposed algorithms are optimal. We also propose an adaptive modulation scheme for SUmax problem. We compare the proposed algorithms with the existing algorithms in the literature to assess their performance.
\end{abstract}
%\vspace{-8mm}
%\begin{keywords}
%\vspace{-8mm}
%3GPP LTE, SC-FDMA, adaptive resource allocation, integer programming, canonical duality theory.
%\end{keywords}
\section{Introduction}
Single Carrier Frequency Division Multiple Access (SC-FDMA) is currently
attracting a lot of attention as an alternative to OFDMA in the
uplink. Its low PAPR feature has the potential to benefit the mobile terminals in term of transmit power efficiency. In fact, SC-FDMA is a single
carrier multiple access technique which utilizes single carrier
modulation and frequency domain equalization. Its overall structure and performance are similar to that of OFDMA system. Unlike the parallel transmission of the orthogonal sub-channels in OFDMA, the sub-channels are transmitted sequentially in SC-FDMA. This sequential transmission of sub-channels considerably reduces the envelope fluctuation in transmitted
waveform and results in low PAPR\cite{H.G.Myung1}. There are two types of SC-FDMA:
localized-FDMA (L-FDMA) in which the sub-channels assigned to a user
are adjacent to each other, and interleaved-FDMA (I-FDMA) in which
users are assigned with sub-channels distributed over the entire
frequency band\cite{H.G.Myung1}. In 3GPP-LTE standard\cite{3GPP.Stand1}, the current working assumption is to
use OFDMA for downlink and localized SC-FDMA for uplink.
%Moreover, in the LTE uplink,
%L-FDMA is considered with equal transmit power on all sub-channels
%assigned to the user\cite{3GPP.Stand1}.

%In SC-FDMA and OFDMA systems, resource allocation techniques take
%advantage of the frequency and time diversities of the system in
%order to optimize the use of the available resources. It exploits
%the available channel state information (CSI) at the transmitter side
%for accomplishing an adaptive modulation and sharing the resources
%(sub-channels, slots) among users. The resource allocation has two
%main objectives: first maximize the efficiency of the use of
%resources by allocating the resources to the most appropriate users
%and second achieve fairness between the users. These two objectives
%are conflicting and there is a risk in achieving one at the expense
%of the other. Therefore, a trade-off between fairness and efficiency
%should be achieved.
\vspace{-5mm}
\subsection{SC-FDMA vs OFDMA from a Resource Allocation Perspective}
\vspace{-2mm}
Most of the previous work on resource allocation has focused on
power and sub-channels allocation in downlink OFDMA systems \cite{C.Y.Wong}-\cite{K.Seong}.
One of the well known approaches for solving the OFDMA resource allocation problem is exploiting its time-sharing property\cite{W.Yu}. Based on this property, it is shown in \cite{K.Seong}, and \cite{W.Yu} that for practical number of sub-channels, the resource allocation problem in OFDMA systems can be solved by Lagrange multipliers method with zero duality gap. However, none of above is directly applicable to uplink SC-FDMA. This is due to the fact that in localized SC-FDMA in addition to the restriction of allocating a sub-channel to
one user at most, the multiple sub-channels allocated to a user should be adjacent to each other as well.
Furthermore, a frequency domain equalizer is used in SC-FDMA over all the sub-channels
allocated to the user which makes the SNR expression much more
complicated than in OFDMA where the SNR on each sub-channel is
independent from the other sub-channels.

The common approach used for resource allocation in OFDMA is to formulate the mutual exclusivity restriction on sub-channels allocation as binary-integer constraint, solve the problem to get an approximated solution in continuous domain, and then discretize the continuous values into the closest binary values. But in SC-FDMA resource allocation, this approach cannot be employed. The reason is that if the problem is solved by relaxing the 0-1 constraint, then, during discretization of the continuous domain solution, the adjacency constraint on sub-channels allocation cannot be assured.
This necessitates the design
of a framework that also ensures the adjacency constraint on sub-channels allocation which is a very difficult task.
\vspace{-5mm}
\subsection{Related Work}
\vspace{-2mm}
In most of the previous work on SC-FDMA, the implementation
problems in the physical layer are studied (e.g.,\cite{R.Dinis}-\cite{3GPP.Stand2}).
The resource allocation problem in uplink SC-FDMA has also been addressed in a number of publications.
In \cite{M.Al-Rawi}, a heuristic opportunistic scheduler for allocating frequency bands to
the users in the uplink of 3G LTE systems is proposed.
In \cite{J.Lim1}, the authors have proposed a greedy sub-optimal schedular for uplink SC-FDMA systems that is based on marginal capacity maximization. In \cite{J.Lim2}, the authors revise the same framework used in \cite{J.Lim1} for developing a proportional fair scheduling scheme. However, in addition to being sub-optimal, the proposed schedulers in both \cite{J.Lim1} and \cite{J.Lim2} do not consider the sub-channels adjacency constraint which is an important physical layer requirement for localized SC-FDMA. In \cite{S.Lee}, a set of greedy sub-optimal proportional fair algorithms for localized SC-FDMA systems is proposed in the frequency-domain setting. This work respects the sub-channels adjacency constraint but does not consider any constraint on the power. In \cite{I.C.Wong}, a weighted-sum rate maximization in localized SC-FDMA systems is considered where the problem is formulated as a pure binary-integer program. Though the proposed binary-integer programming framework captures all the basic constraints of the localized SC-FDMA and allows to perform resource allocation without resorting to exhaustive search, it is still not the best solution as the 0-1 requirement turns the problem into combinatorial with exponential complexity.
%The binary-integer programming is classified as NP-hard and it is one of the Karp's 21 NP-complete problems\cite{R.M.Karp}.
Thus, keeping in view the computational complexity of the binary-integer programming, the authors have also proposed a greedy sub-optimal algorithm that is similar in spirit to the approach in \cite{J.Lim1} with an additional constraint on the adjacency of the allocated sub-channels. Moreover, all the previous work is based on rate/capacity maximization and no work to the best of our knowledge has considered power minimization in uplink SC-FDMA systems. Since the mobile terminals have limited energy, energy-economization is needed and fast power control should be considered while allocating the resources to the users in the uplink.
\vspace{-5mm}
\subsection{Motivation and Contributions}
\vspace{-2mm}
In this paper, we consider resource allocation and adaptive modulation in localized SC-FDMA systems. We consider two optimization problems: sum-utility maximization (SUmax), and joint adaptive modulation and sum-cost minimization (JAMSCmin). Both these problems are combinatorial in nature whose optimal solutions are exponentially complex in general. The performance metric considered in the SUmax problem is the total utility of the system. Utility is basically an economics concept that reflects the user satisfaction in the system. We assume that each user in the system has an associated utility function, and the objective is to
%maximize the sum-utility in the system. More specifically, we
propose a polynomial-complexity resource allocation framework that could maximize the sum-utility while respecting all the constraints of localized SC-FDMA systems specific to the LTE uplink. The user utility function specific to this paper is defined as an arbitrary function that is monotonically increasing in user's SNR. The performance of the system can be further enhanced by choosing an efficient modulation scheme for each user. Therefore, based on the resource allocation, we also propose an adaptive modulation scheme, wherein an appropriate modulation is chosen for each user depending upon its effective SNR. The cost associated to each user in the JAMSCmin problem is a function that is monotonically increasing in the transmit power of that user. The objective of the JAMSCmin is to propose a low-complexity framework that jointly allocates the transmit powers, sub-channels and the modulation schemes to the users in order to minimize the total transmit power while ensuring the individual target data rates of the users as well as capturing the basic constraints of the localized SC-FDMA systems. The joint adaptive modulation in the JAMSCmin problem is important due to the fact that in order to ensure the target data rate of the users, the powers and sub-channels allocation should take into account the modulation schemes used by the users.

%The common approach used for resource allocation in OFDMA is to formulate the mutual exclusivity restriction on sub-channels allocation as binary-integer constraint, solve the problem to get an approximated solution in continuous domain, and then discretize the continuous values into the closest binary values. But in SC-FDMA resource allocation, this approach cannot be employed. The reason is that if the problem is solved by relaxing the 0-1 constraint, then, during discretization of the continuous domain solution, the adjacency constraint on sub-channels allocation cannot be assured. This necessitates the design
%of a framework that could find exact integer solution to the above problems which is a very difficult task.

In this paper, we propose a novel framework for the solution of the above problems.
In our optimization framework, first we formulate the optimization problems as binary-integer programming problems. We then transform the binary-integer programming problems into canonical dual problems \cite{D.Y.Gao} in continuous space that are concave maximization problems under ceratin conditions. We provide the global optimality conditions under which
the solution to each dual canonical problem is identical to the solution
of the corresponding primal problem. We also explore some bounds on the sub-optimality of the proposed framework when the optimality conditions are not satisfied. Our proposed framework has polynomial complexity which is a significant improvement over exponential complexity.

The rest of this paper is organized as follows: Sections II provides the system model, and Section III presents the problems formulation. The canonical dual optimization framework for the solution of both the problems is provided in Section IV, and the resource allocation and adaptive modulation algorithms are derived in Section V. Section VI illustrates the numerical results, and Section VII concludes the paper.

The following notations are used in this paper. Superscripts $(.)^{T}$, and $(.)^{H}$ stand for transpose, and Hermitian of a vector or a matrix respectively. Uppercase and lowercase boldface letters denote matrices, and vectors respectively. The word ``dual" used in this paper refers to ``canonical dual".
\section{System Model}
We consider the uplink of a single cell model that utilizes
localized SC-FDMA. The generalization to multi-cell scenario
is straightforward by considering the inter cell interference in the
signal-to-interference-plus-noise ratio (SINR) expression. We make it clear that this paper does not study inter-cell interference reduction/mangement but aims to optimize the resources in each cell by
an efficient resource allocation algorithm. In the cell, $K$
users are summed to be simultaneously active. The total bandwidth $B$
is divided into $N$ sub-channels each having 12 sub-carriers. The channel is assumed to be slowly fading
or in other words assumed to exhibit block fading characteristics.
The coherence time of the channel is greater than the transmission-time-interval
(TTI) so that the channel stays relatively constant
during the TTI (in 3GPP-LTE, TTI = 0.5msec). The users' channel gains are assumed to be perfectly known.

In the following, all signals are represented by their discrete
time equivalents in the complex baseband. Assume that $N_{k}$ be the number of consecutive sub-channels allocated to user $k$ (since a sub-channel cannot be allocated to more than one user simultaneously, $\sum_{k=1}^{K}N_{k}=N$). Let $\textbf{s}_{k}=[s_{k,1}, . . ., s_{k,N_{k}}]^{T}$ be the modulated symbol vector of the $k$th user, and $\textbf{F}_{N}$ and $\textbf{F}^{H}_{N}$ denote an
$N$-point DFT and an $N$-point Inverse DFT (IDFT) matrices respectively. The assignment of the data modulated
symbols $\textbf{s}_{k}$ to the user specific set of $N_{k}$ sub-channels can
be described by a $N_{k}$-point DFT precoding matrix $\textbf{F}_{N_{k}}$, a
$N*N_{k}$ mapping matrix $\textbf{D}_{k}$ and an $N$-point IDFT matrix
$\textbf{F}^{H}_{N}$. The mapping matrix $\textbf{D}_{k}$ represents the blockwise sub-channel allocation where the elements $D_{k}(n,q)$ for $n=0, . . ., N-1$ and $q=0, . . ., N_{k}-1$ are given by
\begin{eqnarray}\label{}
   D_{k}(n,q)= \left \{ \begin{array}{ll}
     1 &
      \textrm{$n=\sum_{j=1}^{k-1}N_{j}+q$}    \\
    0 &  \textrm{elsewhere}
  \end{array} \right.
\end{eqnarray}
The transmitted signal is then
\vspace{-5mm}
\begin{equation}\label{}
    \textbf{x}_{k}=\textbf{F}_{N}^{H}\textbf{D}_{k}\textbf{F}_{N_{k}}\textbf{s}_{k}
\end{equation}
At the receiver, the received signal is transformed into the
frequency domain via a $N$-point DFT. The received signal vector for user $k$ assuming perfect sample
and symbol synchronization, is given as
\begin{equation}\label{rcvd-signal}
     \textbf{y}_{k}=\textbf{H}_{k}\textbf{F}_{N}^{H}\textbf{D}_{k}\textbf{F}_{N_{k}} \textbf{s}_{k}+\textbf{z}_{k}
\end{equation}
where $\textbf{H}_{k}=\text{diag}(h_{k,1}, . . ., h_{k,N})$ and $\textbf{z}_{k}=[z_{k,1}, . . ., z_{k,N}]^T$ are respectively the
diagonal channel response matrix and the diagonal Additive White Gaussian Noise (AWGN) vector in the frequency
domain. A frequency domain equalizer is then used in order to mitigate the ISI.
The equalized symbols are transformed back to the time
domain via an $N_{k}$-point IDFT, and the detection takes place in the time
domain.
Let $P_{k,n}$, and $\sigma^2_z$ denote the transmit power of user $k$ on sub-channel $n$, and the ambient noise variance at the receiver for user $k$ respectively.
After several manipulations, the effective SNR for user $k$ can be obtained as follows\cite{3GPP.Stand2}:
\begin{equation}
    \gamma^{ZF}_{k}=\left(\frac{1}{N_{k}}\sum_{n=1}^{N_{k}}\frac{1}{P_{k,n}G_{k,n}}\right)^{-1},\quad\quad
    \gamma^{MMSE}_{k}=\left(\frac{1}{\frac{1}{N_{k}}\sum_{n=1}^{N_{k}}\frac{P_{k,n}G_{k,n}}{1+P_{k,n}G_{k,n}}}-1\right)^{-1}
    \label{SNRs}
\end{equation}
where $\gamma^{ZF}_{k}$ is the SNR when ZF equalizer is used and $\gamma^{MMSE}_{k}$ is the SNR when MMSE equalizer is used, and where $G_{k,n}=\frac{|h_{k,n}|^{2}}{\sigma^{2}_{z}}$. The optimization framework proposed in this paper assumes an MMSE frequency domain equalization at the receiver.
Nevertheless, the proposed framework is equally applicable for ZF equalization at the receiver.
%We recall that this analysis is done for a single-cell scenario (i.e., no inter-cell interference) where the extension to multi-cell scenario is straightforward by including in equation (\ref{rcvd-signal}), the interference exerted by the users of the other cells. For multi-cell scenario, equation (\ref{rcvd-signal}) becomes:
%\begin{equation}\label{}
%    \textbf{y}_{k}=\textbf{H}_{k}\textbf{F}_{N}^{H}\textbf{D}_{k}\textbf{F}_{N_{k}}\textbf{s}_{k}+\textbf{z}_{k}+\textbf{w}_{k}
%\end{equation}
%where the vector $\textbf{w}_{k}$ represents the interference caused to user $k$ by the users of other cells. By following the same derivation steps, the final expressions for the SINR can be obtained as given in (\ref{SNRs}) with $G_{k,n}=\frac{|h_{k,n}|^{2}}{\sigma^{2}_{z}+\varphi_{w}}$, where $\varphi_{w}$ represents the variance of the inter-cell interference caused to user $k$ by the users of the other cells.

Unlike OFDMA where a different constellation can be adopted for each sub-channel, in SC-FDMA a single constellation is chosen for each user depending upon its channel quality. This is due to the fact that the transmit symbols directly modulate the sub-channels in OFDMA whereas in SC-FDMA, the transmit symbols are first fed to the FFT block and the output discrete Fourier terms are then mapped to the sub-channels. In 3GPP LTE, the constellation for each user is chosen from the set $M=\{\text{QPSK, 16QAM, 64QAM}\}$.
\section{Problems Formulation}
In this section, we formulate the two optimization problems and their equivalent binary-integer programming (BIP) problems respectively. The formulation of the problems as equivalent binary integer programs is an intermediate step towards its solution which are then approached by the canonical dual method.
\vspace{-3mm}
\subsection{Sum-Utility Maximization (SUmax)}
\subsubsection{SUmax Problem Formulation}
We want to maximize the sum-utility subject to constraint on the total transmit power of each individual
user $P^{max}_k$. We also have per sub-channel peak power constraint, $P^{peak}_{k,n}$ i.e., the peak power
transmitted on each sub-channel by any user should not exceed $P^{peak}_{k,n}$ so that the PAPR is kept low\cite{3GPP.Stand1}. In addition,
in SC-FDMA for LTE uplink, the power on all the sub-channels allocated to a user should be equal\cite{3GPP.Stand1},
so that the low PAPR benefits could retain\cite{H.G.Myung1}. The utility of user $k$ denoted as $U_k(\gamma_k)$ is an arbitrary
function that is monotonically increasing in user's SNR $\gamma_k$. The overall resource allocation problem can be
formulated as
\vspace{-2mm}
\begin{eqnarray}
\max &&\sum_{k=1}^K U_{k}(\gamma_k) \label{general}\\
\vspace{-5mm}
\text{s.t.}&&\sum_{n\in \mathcal{N}_k} P_{k,n} \leq P^{max}_k , \quad \forall k\nonumber\\
\vspace{-5mm}
&&P_{k,n} \leq P^{peak}_{k,n}, \quad \forall k, n\nonumber\\
\vspace{-5mm}
&& P_{k,n} = P_{k,l}, \quad \forall k, n, l\nonumber\\
\vspace{-5mm}
%\end{eqnarray}
%\begin{eqnarray}
&& \mathcal{N}_{k}\cap \mathcal{N}_{j}=\emptyset, \forall k\neq j\nonumber\\
\vspace{-5mm}
&&\biggl\{ n \cap\big( \bigcup_{j=1,j\neq k}^{K}
     \mathcal{N}_{j}\big)= \emptyset \text{ }| \text{ } n \in \left\{n_{1}, n_{1}+1, . . ., n_{2}-1, n_{2}\right\}\biggr\}, \forall k  \nonumber
\vspace{-5mm}
\end{eqnarray}
where $\mathcal{N}_{k}$ with cardinality $N_k$ is the set of sub-channels allocated to users $k$, $n_{1}=\min(\mathcal{N}_{k})$ and $n_{2}=\max(\mathcal{N}_{k})$. The fourth constraint
determines that each sub-channel is allowed to be allocated to one user at most while the last constraint ensures that
the sub-channels included in the set $\mathcal{N}_{k}$ are consecutive.
%Assuming an MMSE frequency domain equalization at the receiver, the effective SNR for user $k$ is:
%\begin{equation}
%\gamma_k = \left(\frac{1}{\frac{1}{N_k}\sum_{n\in \mathcal{N}_k} \frac{P_{k,n}g_{k,n}}{1+P_{k,n}g_{k,n}}}-1\right)^{-1}
%\end{equation}
%We assume an MMSE frequency domain equalization at the receiver.
%Using Shannon's formula, the upper bound for the achievable data rate of user $k$ is:
%\begin{equation}
%R_k = \frac{BN_k}{N}\log_2\left(1+\gamma_k\right) \quad (bps)
%\end{equation}
%Note that the framework proposed in this paper is equally applicable for ZF frequency equalization at the receiver.
The optimization problem (\ref{general}) is combinatorial in nature. There is a twofold difficulty in solving this problem, that is in addition to the exclusivity restriction on the sub-channel allocation, the allocated sub-channels to any user should be adjacent as well. For example, for $K=10$ users and $N=24$ sub-channels, the optimal solution requires a search across $5.26\times10^{12}$ possible sub-channel allocations\cite{I.C.Wong}, which is not practical.

\subsubsection{Equivalent BIP Problem for SUmax problem}
As an intermediate step towards its solution, we transform the problem to a binary-integer programming where the decisions are made on the basis of feasible set of sub-channel allocation patterns that satisfies the exclusivity and adjacency constraints and not on the basis of individual sub-channels. In other words, we form groups of contiguous sub-channels which will be optimally allocated among the users while respecting the exclusive sub-channels allocation constraint. The idea of allocation of sub-channel patterns is the same as in\cite{I.C.Wong}. We elaborate the general idea of forming the feasible sub-channel patterns with a small example. Let us suppose that we have $K=2$ users and $N=4$ sub-channels. In any allocation pattern, we put $1$ if a sub-channel is allocated to a user, and put $0$ if it is not allocated to the user. Thus, keeping in view the sub-channel adjacency constraint, the feasible set of sub-channel patterns for user $k$ can be summarized in the following matrix.
\[
 \textbf{A}^k= \left[\begin{array}{ccccccccccc}
   0 & 1 & 0 & 0 & 0 & 1 & 0 & 0 & 1 & 0 & 1 \\
   0 & 0 & 1 & 0 & 0 & 1 & 1 & 0 & 1 & 1 & 1 \\
   0 & 0 & 0 & 1 & 0 & 0 & 1 & 1 & 1 & 1 & 1 \\
   0 & 0 & 0 & 0 & 1 & 0 & 0 & 1 & 0 & 1 & 1
\end{array}\right]
\]
where each row corresponds to the sub-channel index, and each column corresponds to the feasible sub-channel allocation pattern. Note that all the $K$ users have the same allocation patterns matrix. We define a $KJ$ indicator vector $\textbf{i}=[\textbf{i}_{1}, . . ., \textbf{i}_{K}]^T$ where $\textbf{i}_k=[i_{k,1}, . . ., i_{k,J}]^T$, and where $J$ is the total number of allocation patterns. Each entry $i_{k,j}\in\{0,1\}$ which indicates whether a sub-channel pattern $j$ is allocated to a user $k$ or not.
Since a single sub-channel pattern can be allocated to each user, maximizing the users' sum-utility is equivalent to maximizing the sum-utility of all users over all sub-channel allocation patterns such that each user is assigned a single pattern while respecting the exclusive sub-channel allocation constraint. Based on this analysis we have the following lemma.
\newtheorem{lemma1}{\textbf{Lemma}}[section]
\begin{lemma1}
The sum-utility maximization problem can be written as the following binary-integer programming problem:
\vspace{-5mm}
\begin{equation}
\max_{\textbf{i}} \left\{\mathcal{P}(\textbf{i})=\sum_{k=1}^K \sum_{j=1}^Ji_{k,j}U_{k,j}(\gamma_{k,j}^{eff})\right\} \label{integer}
\vspace{-2mm}
\end{equation}
\[ \text{s.t.} \quad\sum_{k=1}^K \sum_{j=1}^Ji_{k,j}A^k_{n,j} = 1, \quad \forall n\tag{\ref{integer}a}\vspace{-5mm}\]
\[\sum_{j=1}^Ji_{k,j} = 1, \quad \forall k\tag{\ref{integer}b}\vspace{-2mm}\]
\[i_{k,j}\in\{0,1\}, \quad \forall k,j\tag{\ref{integer}c}\vspace{-5mm}\]
\end{lemma1}
where $U_{k,j}(\gamma_{k,j}^{eff})$, a monotonically increasing function of the effective SNR $\gamma^{eff}_{k,j}$ is the utility of user $k$ when allocation pattern $j$ is chosen, and $A^k_{n,j}$ denotes the element of matrix $\textbf{A}^k$ corresponding to $n$th row and $j$th column.
\begin{proof}
The proof is simple and follows from the following illustration. The effective SNR $\gamma^{eff}_{k,j}$ of user $k$ for pattern $j$ is defined as:
\vspace{-7mm}
\begin{equation}
\textstyle\gamma^{eff}_{k,j} = \left(\frac{1}{\frac{1}{N_{k,j}}\sum_{n\in \mathcal{N}_{k,j}} \frac{\min\left(P^{peak}_{k,n}, \frac{P^{max}_k}{N_{k,j}}\right)G_{k,n}}{1+\min\left(P^{peak}_{k,n}, \frac{P^{max}_k}{N_{k,j}}\right)G_{k,n}}}-1\right)^{-1}
\end{equation}
where $N_{k,j}$ is the number of sub-channels allocated to user $k$ when allocation pattern $j$ is chosen. The constraint (\ref{integer}a) ensures the exclusive sub-channel allocation i.e., any two sub-channel patterns allocated to two different users must not have any sub-channel in common. The constraint (\ref{integer}b) means that at most one allocation pattern is chosen for each user. The per-user total power, the per sub-channel peak power and the allocated sub-channels power equality constraints are all implicitly accommodated in $\gamma^{eff}_{k,j}$.
\end{proof}
\subsection{Joint Adaptive Modulation and Sum-Cost Minimization (JAMSCmin)}
\subsubsection{JAMSCmin Problem Formulation}
We now formulate the joint resource allocation and adaptive
modulation problem. The objective is to allocate powers and
sub-channels, and to choose the modulation scheme for each user in
order to minimize the sum-cost while satisfying the the target data
rate constraint of all the users (i.e., $R^{T}_k, \forall k$). For a
modulation $m\in M$ to be chosen, the effective SNR of the user
should not be less than a minimum value $\Gamma^{*}_m$ that
guarantees a target Block Error Rate (BLER) at the receiver. In
addition, the power on all the sub-channels allocated to a user
should be equal \cite{3GPP.Stand1}. In the uplink the users
terminals are more sensitive to transmit power due to their
batteries's power limitations. Therefore, we introduce in the
JAMSCmin formulation a user's cost which is function of its transmit
power and has to be minimized. We define the following cost function
for each user $k$
\vspace{-3mm}
\begin{equation}
C_k(P_k^{max},P_k) = - \exp{[P_k^{max} -P_k]}
\vspace{-4mm}
\end{equation}
where $P_k^{max}$ is the maximum power a user can transmit, and $P_k=\sum_{n\in\mathcal{N}_{k}}P_{k,n}$ is the sum of powers transmitted by user $k$ on its allocated set of sub-channels $\mathcal{N}_{k}$. The cost function is monotonically increasing in $P_k$ whereas it is monotonically decreasing in $P_k^{max}$. With this choice of cost function, the JAMSCmin problem will not only minimize the sum-power of the users but will also ensure that each user's transmit power is minimized in accordance to its $P_k^{max}$ level. In other words, a user with small $P_k^{max}$ will transmit small power compared to another user with high $P_k^{max}$, and vice versa.
The joint optimization problem can now be formulated as follows
\vspace{-3mm}
\begin{eqnarray}
\min &&\sum_{k=1}^K C_k(P_k^{max},P_k) \label{general2} \\
\vspace{-2mm}
\text{s.t.} && R_{k} \geq R^{T}_k , \forall k\nonumber\\
\vspace{-4mm}
&& P_{k,n}=P_{k,l}, \forall k,n,l\nonumber\\
&&\gamma_k \geq \Gamma^{*}_m , \forall k,m \nonumber\\
&&|\mathcal{M}_k \cap M | = 1, \forall k\nonumber\\
\vspace{-5mm}
&&\mathcal{N}_{k}\cap \mathcal{N}_{j}=\emptyset, \forall k\neq j \nonumber\\
\vspace{-5mm}
&&\biggl\{ n \cap\big( \bigcup_{j=1,j\neq k}^{K}\mathcal{N}_{j}\big)= \emptyset \text{ }| \text{ } n \in \left\{n_{1}, n_{1}+1, . . ., n_{2}-1, n_{2}\right\}\biggr\}, \forall k \nonumber
\end{eqnarray}
where $R_k$ is the $k$th user achieved data rate, $\mathcal{M}_{k}$ is a non-empty one element set that contains the modulation chosen for $k$th user; and where $\mathcal{N}_{k}$, $n_1$, and $n_2$ are the same as defined for SUmax problem. The fourth constraint reflects that a single modulation scheme is chosen for each user from the set $M$. In addition to its inherent difficulty due its  combinatorial nature as explained for the SUmax problem, the joint adaptive modulation in addition to resource allocation renders the optimization problem (\ref{general}) far more difficult to be solved.

We now formulate this joint optimization problem as an equivalent BIP problem in the following.
\subsubsection{Equivalent BIP for JAMSCmin Problem}
The sub-channel allocation patterns matrix is exactly the same as that for the SUmax problem. However, as the JAMSCmin problem considers joint adaptive modulation and resource allocation, we integrate the modulation selection into the sub-channel allocation patterns matrix.
%The sub-channel allocation patterns matrix (for the example with $K=2$ and $N=4$) without modulation selection for user $k$ is denoted by $\textbf{B}^k$.
%\vspace{-.5mm}
%\[
% \small{\textbf{B}^k= \left[\begin{array}{ccccccccccc}
%   0 & 1 & 0 & 0 & 0 & 1 & 0 & 0 & 1 & 0 & 1 \\
%   0 & 0 & 1 & 0 & 0 & 1 & 1 & 0 & 1 & 1 & 1 \\
%   0 & 0 & 0 & 1 & 0 & 0 & 1 & 1 & 1 & 1 & 1 \\
%   0 & 0 & 0 & 0 & 1 & 0 & 0 & 1 & 0 & 1 & 1
%\end{array}\right]}
%\vspace{-.5mm}\]
Since the number of sub-channels needed for transmitting a certain number of bits depends on the modulation scheme used, we refine the feasible allocation pattern matrix according to the modulation schemes. For example, the minimum number of sub-channels/TTI needed for $R^T_k=140$kbps is 3, 2 and 1 for QPSK, 16QAM and 64QAM respectively. We recall that a TTI $=$ 0.5msec, and each sub-channel contains 12 sub-carriers. Thus, for example, for total number of sub-channels $N=4$, the $k$th user's feasible matrix of sub-channels allocation patterns for QPSK can be written as
\vspace{-.5mm}
\[
 \small{\textbf{B}^k_1= \left[\begin{array}{ccccccccccc}
   1 & 1 & 1 & 1 & 1 & 1 & 1 & 1 & 1 & 0 & 1 \\
   1 & 1 & 1 & 1 & 1 & 1 & 1 & 1 & 1 & 1 & 1 \\
   1 & 1 & 1 & 1 & 1 & 1 & 1 & 1 & 1 & 1 & 1 \\
   1 & 1 & 1 & 1 & 1 & 1 & 1 & 1 & 0 & 1 & 1
\end{array}\right]}
\vspace{-.5mm}\]
where the subscript $m$ in $\textbf{B}^k_m$ corresponds to the modulation index. This matrix reflects that for the given $R_k^T$, the number of sub-channels allocated to user $k$ should not be less than 3 if QPSK is chosen. The same approach can be used to define $k$th user's sub-channels allocation patterns matrices for 16QAM and 64QAM. Depending upon their target data rates, the sub-channels allocation patterns matrices can be defined for all users on all modulation schemes. We define a $KMJ$ indicator vector $\boldsymbol{\ell}=[\boldsymbol{\ell}_{1,1}, . . ., \boldsymbol{\ell}_{K,M}]^T$ where $\boldsymbol{\ell}_{k,m}=[\ell_{k,m,1}, . . ., \ell_{k,m,J}]^T$, and where $J$ is the total number of columns in the allocation pattern matrices. Each entry $\ell_{k,m,j}\in\{0,1\}$ which indicates whether a sub-channel pattern $j$ corresponding to pattern allocation matrix $\textbf{B}^k_m$ is chosen or not.
Since a single sub-channel pattern and a single modulation scheme can be chosen for each user, minimizing the users' sum-cost is equivalent to minimizing the sum-cost of all users over all sub-channel allocation pattern matrices such that each user is assigned a single pattern and a single modulation scheme while respecting the exclusive sub-channel allocation constraint.
\newtheorem{lemma2}[lemma1]{\textbf{Lemma}}
\begin{lemma2}
The joint resource allocation and adaptive modulation problem can be written as the following BIP problem:
\vspace{-7mm}
\begin{eqnarray}
\min_{\boldsymbol{\ell}}\left\{\mathrm{g}(\boldsymbol{\ell})=\sum_{k=1}^K \sum_{m=1}^M\sum_{j=1}^J\ell_{k,m,j}C_{k,j,m}(P_k^{max},P_{k,m,j})\right\} \label{integer2}
\end{eqnarray}
\vspace{-4mm}
\[\hspace{-4.5mm}\text{s.t. } \sum_{k=1}^K \sum_{m=1}^M\sum_{j=1}^J\ell_{k,m,j}B^k_{m,n,j} = 1, \quad \forall n \tag{\ref{integer2}a} \]
\vspace{-5mm}
\[\hspace{-18mm}\sum_{m=1}^M\sum_{j=1}^J\ell_{k,m,j} = 1, \forall k \tag{\ref{integer2}b} \vspace{-0.5mm}\]
\[\hspace{-18mm}\ell_{k,m,j}\in\{0,1\}, \forall k,m,j\tag{\ref{integer2}c}\vspace{-4mm}\]
\end{lemma2}
where $B^k_{m,n,j}$ denotes the element of matrix $\textbf{B}^k_{m}$ corresponding to $n$th row and $j$th column, $P_{k,m,j}=f(\gamma^{eff}_{k,m,j},R_k^T,\Gamma_m^{*})$ is the power transmitted by user $k$ when $j$th sub-channels allocation pattern corresponding to $\textbf{B}^k_m$ is chosen, and $C_{k,m,j}(P_k^{max},P_{k,m,j})=-\exp{[P_k^{max}-P_{k,m,j}]}$.
\begin{proof}
The transmit power $P_{k,m,j}$ is a function of $R_k^T$, $\Gamma_m^{*}$, and the effective SNR  $\gamma^{eff}_{k,m,j}$ of user $k$ for $j$th pattern of $\textbf{B}^k_m$. Let $P_{k,m,n}$ be the power for user $k$ on sub-channel $n$ when modulation $m$ is chosen, then $\gamma^{eff}_{k,m,j}$ is given by
\vspace{-5mm}
\begin{equation}
\gamma^{eff}_{k,m,j} = \Bigg(\frac{1}{\frac{1}{N_{k,m,j}}\sum_{n\in \mathcal{N}_{k,m,j}} \frac{P_{k,m,n}G_{k,n}}{1+P_{k,m,n}G_{k,n}}}-1\Bigg)^{-1}
\end{equation}
where $\mathcal{N}_{k,m,j}$ with cardinality $N_{k,m,j}$ is the set of sub-channels allocated to user $k$ when $j$th pattern from $\textbf{B}^k_m$ is chosen. The power allocation values $P_{k,m,j}$'s are obtained prior to resource allocation by solving the following equations:
\vspace{-5mm}
\begin{eqnarray}
\sum_{n\in \mathcal{N}_{k,m,j}}\left(\frac{P_{k,m,j}G_{k,n}}{N_{k,m,j}+P_{k,m,j}G_{k,n}}\right)-
\frac{N_{k,m,j}\Gamma_m^{*}}{1+\Gamma_m^{*}}=0, \forall k,m,j
\end{eqnarray}
which are obtained by setting $\gamma^{eff}_{k,m,j} =\Gamma_m^{*}$ and $P_{k,m,n}=\frac{P_{k,m,j}}{N_{k,m,j}}$ and hence the per user minimum SNR and the allocated sub-channels powers equality constraint are implicitly accommodated in $P_{k,m,j}$. The per-user target data rate constraint is already implicitly accommodated in the definition of allocation patterns and hence in the calculation of $P_{k,m,j}$. The constraint (\ref{integer2}a) reflects the mutual exclusivity restriction on the sub-channels allocation and constraint (\ref{integer2}b) means that at most one allocation pattern and one modulation scheme is chosen for each user.
\end{proof}
We recall that the formulation of the problems as equivalent binary-integer programs is an intermediate step towards their solution. Although the BIP problems may look simple compared to the primal problem but unfortunately, their solutions are exponentially complex due to their combinatorial nature. A similar binary-integer programming solution was proposed for weighted-sum rate maximization problem in \cite{I.C.Wong} but as mentioned before it is exponentially complex which is not practical. In the following section, we propose a polynomial-complexity framework for the solution of both the above problems that is inspired from the canonical dual transformation method.
The main idea of our proposed approach is to transform each binary-integer programming problem into a canonical dual problem in the continuous space whose solution is identical to the corresponding binary integer program under certain conditions.
\section{Canonical Dual Approach for Solving the BIP Problems}
Under certain constraints/conditions, the canonical duality theory\cite{D.Y.Gao} can be used to reformulate some non-convex/non-smooth constrained problem into certain convex/smooth canonical dual problems with perfect primal/dual relationship. However, this theory does not provide any general strategy for the solution of non-convex/non-smooth problems. The constraints under which the canonical dual problem could be perfectly dual to its primal problem is purely dependent on the nature of the primal problem under consideration and should be studied for each specific problem anew. This theory comprises of canonical dual transformation, an associated complementary-dual principle, and an associated duality theory. The canonical dual transformation can be used to convert the non-smooth problem into a smooth canonical dual problem; the complementary-dual principle can be used to study the relationship between the primal and its canonical dual problems; and the associated duality theory can help to identify both local and global extrema. Comprehensive details about this theory, and its application to an unconstrained 0-1 quadratic programming problems can be found in \cite{D.Y.Gao}, and \cite{S.C.Fang} respectively. Due to the presence of additional constraints, our problems are far more difficult compared to that described in\cite{S.C.Fang}.

By using the aforementioned theory, we transform each of the SUmax and JAMSCmin primal problems into a continuous space canonical dual problem in the following. We then study the optimality conditions, and prove that under these conditions, the solution of each canonical dual problem is identical to that of the corresponding primal problem.
\vspace{-4mm}
\subsection{Canonical Dual Problem and Optimality Conditions for SUmax Problem}
The objective function, ${\mathcal{P}}(\textbf{i})$ in problem (\ref{integer}) is a real valued linear function defined on $\mathcal{I}_a = \textbf{i} \subset \mathbb{R}^{K\times J}$ with feasible space defined by
\begin{eqnarray}
\mathcal{I}_f = \left\{\textbf{i} \in \mathcal{I}_a \subset \mathbb{R}^{K\times J} \quad | \quad \sum_{k=1}^K \sum_{j=1}^Ji_{k,j}A^k_{n,j} = 1, \forall n; \quad \sum_{j=1}^Ji_{k,j} = 1 ,\forall k;\quad i_{k,j}\in\{0,1\} \forall k,j\right\}
\label{fble-space}
\end{eqnarray}

We start our development by introducing new constraints $i_{k,j}(i_{k,j}-1)=0,\forall k,j$ which means that any $i_{k,j}$ can only take an integer value from the set $\{0,1\}$. This approach is
used for the solution of a 0-1 quadratic programming problem in \cite{S.C.Fang}. However, the problem considered in
\cite{S.C.Fang} is a simple unconstrained 0-1 quadratic programming problem while our problem is combinatorial
in nature with additional constraints. In other words, in addition to the binary-integer constraint on $i_{k,j}$'s, we
have the mutual exclusivity restriction on the allocation of sub-channel patterns $\left(\text{i.e., }\{i_{k,j}\times i_{l,j}=0 | k\neq l, \forall k,l\in\{1, . . .,K\}\}\right)$,
and the mutual exclusivity constraint on the sub-channel allocation i.e.,
$\sum_{k=1}^K \sum_{j=1}^Ji_{k,j}A^k_{n,j} = 1, \forall n$. Furthermore, at most one sub-channel pattern can be allocated to a user i.e., $\sum_{j=1}^Ji_{k,j} = 1, \forall k$. Note that the mutual exclusivity restriction on the sub-channel patterns allocation is accommodated implicitly in the formulation of the primal problem and does not show up explicitly.
We temporarily relax the new constraints $i_{k,j}(i_{k,j}-1)=0,\forall k,j$, and the equality constraints (\ref{integer}a-\ref{integer}b) to inequalities and transform the primal problem with these inequality constraints into continuous domain canonical dual problem. We will then solve the canonical dual problem in the continuous space and chose the solution which lies in $\boldsymbol{\mathcal{I}}_f$ as defined by (\ref{fble-space}).
Furthermore, for our convenience, we reformulate our primal problem as an equivalent minimization problem. The primal problem with these inequality constraints can now be written as follows.
\vspace{-2mm}
\begin{equation}
\min_{\textbf{i}} \left\{\mathrm{f}\left(\textbf{i}\right)=-\sum_{k=1}^K \sum_{j=1}^Ji_{k,j}U_{k,j}\right\} \label{relaxed}
\vspace{-3mm}
\end{equation}
\[ \text{s.t.} \quad\sum_{k=1}^K \sum_{j=1}^Ji_{k,j}A^k_{n,j} \leq 1, \quad \forall n \vspace{-3mm}\]
\[\hspace{-4mm}\sum_{j=1}^Ji_{k,j} \leq 1, \quad \forall k\vspace{-3mm}\]
\[\hspace{8mm}i_{k,j}\left(i_{k,j}-1\right)\leq 0, \quad \forall k,j \vspace{-3mm}\]
\[i_{k,j}\in\{0,1\}, \quad \forall k,j\vspace{-2mm}\]
where $U_{k,j}$ is used to denote $U_{k,j}(\gamma_{k,j}^{eff})$ and will be used in the remainder of the paper.

The temporary relaxation of the constraints to inequalities is needed for developing the canonical dual framework. We prove later that the solution of the canonical dual problem achieves the binary-integer constraints i.e., $i_{k,j}\left(i_{k,j}-1\right)=0,\forall k,j$  and all the other constraints with equality. As a first step towards its transformation into a canonical dual problem, we relax the primal problem (\cite{D.Y.Gao,S.C.Fang}). To this end, we define the so-called canonical geometrical operator $\textbf{x}=\Lambda(\textbf{i})$ for the above primal problem as follows:
\begin{eqnarray}
\textbf{x}=\Lambda(\textbf{i}) = \left( \boldsymbol{\epsilon},\boldsymbol{\lambda},\boldsymbol{\rho}\right) : \mathbb{R}^{KJ} \rightarrow \mathbb{R}^N \times\mathbb{R}^K \times \mathbb{R}^{KJ}
\end{eqnarray}
which is a vector-valued mapping and where $\boldsymbol{\rho}= [\textbf{i}^T_1(\textbf{i}_1-1), . . ., \textbf{i}^T_K(\textbf{i}_K-1)]^T$ is a KJ-vector with $\textbf{i}^T_k(\textbf{i}_k-1)=[i_{k,1}(i_{k,1}-1), . . ., i_{k,J}(i_{k,J}-1)]^T$, $\boldsymbol{\lambda}=\left[\left(\sum_{j=1}^Ji_{1,j}-1\right), . . .,\left(\sum_{j=1}^Ji_{K,j}-1\right)\right]^T$ is a K-vector and $\boldsymbol{\epsilon} = \left[\left(\sum_{k=1}^K \sum_{j=1}^Ji_{k,j}A^k_{1,j}-1\right), . . .,\left(\sum_{k=1}^K \sum_{j=1}^Ji_{k,j}A^k_{N,j}-1\right)\right]^T$ is an N-vector. Let $\chi_a$ be a convex subset of $\mathcal{\chi}=\mathbb{R}^N \times\mathbb{R}^K \times \mathbb{R}^{KJ}$ defined as follows
\begin{equation}
\chi_a = \left\{ \textbf{x}= \left(\boldsymbol{\epsilon},\boldsymbol{\lambda},\boldsymbol{\rho}\right) \in \mathbb{R}^N \times\mathbb{R}^K \times \mathbb{R}^{KJ} \quad | \quad \boldsymbol{\epsilon}\leq0,\boldsymbol{\lambda}\leq0,\boldsymbol{\rho}\leq0\right\}
\end{equation}
We introduce an indicator function $V: \chi \rightarrow \mathbb{R} \cup \{+\infty\}$, defined as
\begin{eqnarray}
V(\textbf{x}) = \left\{\begin{array} {l} 0 \quad \quad \text{  if} \quad \textbf{x} \in \chi_a, \\
+\infty \quad \text{otherwise.}
\end{array}
\right.
\end{eqnarray}
Thus, the inequality constraints in the primal problem (\ref{relaxed}) can now be relaxed by the indicator function $V(\textbf{x})$, and the primal problem can be written in the following canonical form\cite{S.C.Fang}:
\begin{eqnarray}
\min_{\textbf{i}} \left\{V(\Lambda(\textbf{i}))-\sum_{k=1}^K\sum_{j=1}^Ji_{k,j}U_{k,j} \quad | \quad i_{k,j}\in\{0,1\} \forall k,j\right\}
\label{Can-form}
\end{eqnarray}
We now define the canonical dual variables and the canonical conjugate function associated to the indicator function in order to proceed with the transformation of the primal problem into canonical dual. Since $V(\textbf{x})$ is convex, lower semi-continuous on $\chi$, the canonical dual variable $\textbf{x}^*\in\chi^*=\chi=\mathbb{R}^N \times\mathbb{R}^K \times \mathbb{R}^{KJ}$ is defined as:%\cite{D.Y.Gao}:
\vspace{-5mm}
\begin{eqnarray}
\textbf{x}^*\in\partial V(\textbf{x}) = \left\{\begin{array} {l} \left(\boldsymbol{\epsilon}^*,\boldsymbol{\lambda}^*,\boldsymbol{\rho}^*\right) \quad \text{if} \quad \boldsymbol{\epsilon}^*\geq0 \in \mathbb{R}^N,\boldsymbol{\lambda}^*\geq0\in \mathbb{R}^K,\boldsymbol{\rho}^*\geq0\in \mathbb{R}^{KJ}, \\
\emptyset \quad \quad \quad \quad  \quad \text{  otherwise.}
\end{array}
\right.
\end{eqnarray}
By the Legendre-Fenchel transformation, the canonical super-conjugate function of $V(\textbf{x})$ is defined by
\begin{eqnarray}
V^{\sharp}(\textbf{x}^*)&=&\sup_{\textbf{x}\in\chi}\left\{\textbf{x}^T\textbf{x}^*-V(\textbf{x})\right\} =\sup_{\boldsymbol{\epsilon}\leq0}\sup_{\boldsymbol{\lambda}\leq0}\sup_{\boldsymbol{\rho}\leq0}\left\{\boldsymbol{\epsilon}^T\boldsymbol{\epsilon}^*+\boldsymbol{\lambda}^T\boldsymbol{\lambda}^*+\boldsymbol{\rho}^T\boldsymbol{\rho}^*\right\}\nonumber\\
&=& \left\{\begin{array} {l} 0 \quad \quad \text{  if} \quad \boldsymbol{\epsilon}^*\geq0,\boldsymbol{\lambda}^*\geq0,\boldsymbol{\rho}^*\geq0, \\
+\infty \quad  \text{otherwise.}
\end{array}
\right.
\end{eqnarray}
The effective domain of $V^{\sharp}(\textbf{x})$ is given by
\begin{equation}
\chi^*_a = \left\{\left(\boldsymbol{\epsilon}^*,\boldsymbol{\lambda}^*,\boldsymbol{\rho}^*\right) \in \mathbb{R}^N \times\mathbb{R}^K \times \mathbb{R}^{KJ} \quad | \quad \boldsymbol{\epsilon}^*\geq0\in \mathbb{R}^N,\boldsymbol{\lambda}^*\geq0\in \mathbb{R}^K,\boldsymbol{\rho}^*\geq0\in \mathbb{R}^{KJ}\right\}
\end{equation}
Since both $V(\textbf{x})$ and $V^{\sharp}(\textbf{x})$ are convex, lower semi-continuous, the Fenchel sup-duality relations
\begin{equation}
\textbf{x}^*\in\partial V(\textbf{x}) \Leftrightarrow \textbf{x}\in\partial V^{\sharp}(\textbf{x}^*) \Leftrightarrow
V(\textbf{x})+V^{\sharp}(\textbf{x}^*)=\textbf{x}^T\textbf{x}^*
\label{sup-duality}
\end{equation}
hold on $\chi \times \chi^*$. The pair $(\textbf{x},\textbf{x}^*)$ is called the extended / Legendre canonical dual pair on $\chi \times \chi^*$, and the functions $V(\textbf{x})$ and $V^{\sharp}(\textbf{x})$ are called canonical functions\cite{D.Y.Gao}. The optimal solution of our primal problem can be obtained if and only if $\textbf{x}=\mathcal{I}_f\in \chi_a$, i.e., along with the satisfaction of the binary-integer constraints, all the other constraints must be achieved with equality. Thus, we need to study the conditions under which the canonical dual variables $\textbf{x}^*\in \chi^*_a$ can ensure that $\textbf{x}=\mathcal{I}_f\in \chi_a$. By the definition of sub-differential, the canonical sup-duality relations (\ref{sup-duality}) are equivalent to the following:
\begin{equation}
\textbf{x}\leq 0, \quad \textbf{x}^*\geq 0, \quad \textbf{x}^T\textbf{x}^*=0
\end{equation}
From the complementarity condition $\textbf{x}^T\textbf{x}^*=0$, for $\textbf{x}^*> 0$, we have $\textbf{x} = 0$ $\left(\text{i.e.,  }\boldsymbol{\epsilon}=0,\boldsymbol{\lambda}=0,\boldsymbol{\rho}=0\right)$ and consequently $\textbf{x}=\mathcal{I}_f\in\chi_a$. This means that for $\textbf{x}^*> 0$, all the constraints of the primal problem (\ref{relaxed}) are achieved by equality (with $i_{k,j}\in\{0,1\},\forall k,j$ which comes from $\boldsymbol{\rho}=0$). Thus, the dual feasible space for the primal problem is an open positive cone defined by
\begin{equation}
\chi^*_{\sharp} = \left\{\left(\boldsymbol{\epsilon}^*,\boldsymbol{\lambda}^*,\boldsymbol{\rho}^*\right) \in \chi^*_a \quad | \quad \boldsymbol{\epsilon}^*>0,\boldsymbol{\lambda}^*>0,\boldsymbol{\rho}^*>0\right\}
\end{equation}
%Following the procedure developed in \cite{D.Y.Gao,S.C.Fang},
The so-called total complementarity function (see \cite{D.Y.Gao,S.C.Fang} for definition), $\Xi(\textbf{i},\textbf{x}^*): \chi \times \chi^*_{\sharp} \rightarrow \mathbb{R}$ associated with the primal problem (\ref{relaxed}) can be defined as follows.
\begin{equation}
\Xi(\textbf{i},\textbf{x}^*) = \Lambda(\textbf{i})^T\textbf{x}^*-V^{\sharp}(\textbf{x}^*) -\sum_{k=1}^K \sum_{j=1}^Ji_{k,j}U_{k,j}
\label{Compl-Fnc1}
\end{equation}
which is obtained by replacing $V(\Lambda(\textbf{i}))$ in (\ref{Can-form}) by $\Lambda(\textbf{i})^T\textbf{x}^*-V^{\sharp}(\textbf{x}^*)$ from Fenchel sup-duality relations\ (\ref{sup-duality}). From the definition of $\Lambda(\textbf{i})$ and $V^{\sharp}(\textbf{x}^*)$, the total complementarity function takes the form:
\begin{eqnarray}
\Xi(\textbf{i},\boldsymbol{\epsilon}^*,\boldsymbol{\lambda}^*,\boldsymbol{\rho}^*) = \sum_{k=1}^K \sum_{j=1}^J\left\{\rho^*_{k,j}i^2_{k,j}+\left( \lambda^*_k-\rho^*_{k,j}-U_{k,j}+\sum_{n=1}^N\epsilon^*_nA^k_{n,j}\right)i_{k,j}\right\}
-\sum_{n=1}^N\epsilon^*_n - \sum_{k=1}^K \lambda^*_k
\label{Compl-Fnc2}
\end{eqnarray}
%Following the idea of
Similar to \cite{S.C.Fang}, the canonical dual function $\mathrm{f}^d(\boldsymbol{\epsilon}^*,\boldsymbol{\lambda}^*,\boldsymbol{\rho}^*)$ associated to our primal problem for a given $(\boldsymbol{\epsilon}^*,\boldsymbol{\lambda}^*,\boldsymbol{\rho}^*)\in \chi^*_{\sharp}$ can be defined as
\begin{eqnarray}
\mathrm{f}^d(\boldsymbol{\epsilon}^*,\boldsymbol{\lambda}^*,\boldsymbol{\rho}^*)= \text{sta}\left\{\Xi(\textbf{i},\boldsymbol{\epsilon}^*,\boldsymbol{\lambda}^*,\boldsymbol{\rho}^*)\quad | \quad \textbf{i}\in \mathcal{I}_a\right\}
\end{eqnarray}
where $\text{sta}\{f(x)\}$ stands for finding the stationary points of $f(x)$. The complementarity function is a quadratic function of $\textbf{i}\in \mathcal{I}_a$, and has therefore a unique stationary point with respect to it for a given $(\boldsymbol{\epsilon}^*,\boldsymbol{\lambda}^*,\boldsymbol{\rho}^*)\in \chi^*_{a}$. The stationary points of $\Xi(\textbf{i},\boldsymbol{\epsilon}^*,\boldsymbol{\lambda}^*,\boldsymbol{\rho}^*)$ over $\textbf{i}\in \mathcal{I}_a$ occurs at $\textbf{i}(\textbf{x}^*)$ with
\begin{eqnarray}
i_{k,j}(\textbf{x}^*) = \frac{1}{2\rho^*_{k,j}}\left(U_{k,j} + \rho^*_{k,j} - \lambda^*_k -\sum_{n=1}^N\epsilon^*_n A^k_{n,j}\right), \quad \forall k,j
\end{eqnarray}
Replacing $i_{k,j}$ by $i_{k,j}(\textbf{x}^*)$ in (\ref{Compl-Fnc2}), we have
\begin{eqnarray}
\mathrm{f}^d(\boldsymbol{\epsilon}^*,\boldsymbol{\lambda}^*,\boldsymbol{\rho}^*)=-\frac{1}{4}\sum_{k=1}^K \sum_{j=1}^J\left\{\frac{\left(U_{k,j} + \rho^*_{k,j} - \lambda^*_k -\sum_{n=1}^N\epsilon^*_n A^k_{n,j}\right)^2}{\rho^*_{k,j}}\right\}-\sum_{n=1}^N\epsilon^*_n - \sum_{k=1}^K \lambda^*_k
\end{eqnarray}
which is a concave function in $\chi^*_{\sharp}$. The canonical dual problem associated with the primal problem (\ref{relaxed}) can now be formulated as follows
\begin{eqnarray}
\text{ext}\left\{\mathrm{f}^d(\boldsymbol{\epsilon}^*,\boldsymbol{\lambda}^*,\boldsymbol{\rho}^*) \quad | \quad (\boldsymbol{\epsilon}^*,\boldsymbol{\lambda}^*,\boldsymbol{\rho}^*)\in \chi^*_{\sharp}\right\}
\label{can-dual}
\end{eqnarray}
where the notation $\text{ext}\left\{f(x)\right\}$ stands for finding the extremum values of $f(x)$.

We have the following canonical duality theorem (Complementary-Dual Principle) on the perfect dual relationship between the primal and its corresponding canonical dual problem.
\newtheorem{theorem1}{\textbf{Theorem}}[section]
\begin{theorem1}
 If $(\overline{\boldsymbol{\epsilon}}^*,\overline{\boldsymbol{\lambda}}^*,\overline{\boldsymbol{\rho}}^*)\in\chi^*_{\sharp}$ is the stationary point of $\mathrm{f}^d(\boldsymbol{\epsilon}^*,\boldsymbol{\lambda}^*,\boldsymbol{\rho}^*) $, such that
\begin{equation}
\overline{\textbf{i}} = [\overline{i}_{1,1}, . . ., \overline{i}_{K,J}]^T \quad \text{with} \quad \overline{i}_{k,j} = \frac{1}{2\overline{\rho}^*_{k,j}}\left(U_{k,j} + \overline{\rho}^*_{k,j} - \overline{\lambda}^*_k -\sum_{n=1}^N\overline{\epsilon}^*_n A^k_{n,j}\right), \forall k,j
\label{KKT-PRIM}
\end{equation}
is the KKT point of the primal problem, and
\begin{equation}
\mathrm{f}(\overline{\textbf{i}}) = \mathrm{f}^d(\overline{\boldsymbol{\epsilon}}^*,\overline{\boldsymbol{\lambda}}^*,\overline{\boldsymbol{\rho}}^*).
\end{equation}
then the canonical dual problem (\ref{can-dual}) is perfectly dual to the primal problem (\ref{integer}).
\end{theorem1}
\begin{proof}
The proof can be obtained directly from the proof of Theorem 1 given in \cite{S.C.Fang}.
\end{proof}

The above theorem shows that the binary-integer programming problem (\ref{integer2}) is converted into a dual problem in continuous domain which is perfectly dual to it. Furthermore, the KKT point of the dual problem provides the KKT point for the primal problem. However, as the KKT conditions are necessary but not sufficient for optimality in general, we need some additional information on the global optimality. Based on the properties of the primal and dual problems, we have the following theorem on the global optimality conditions.
\newtheorem{theorem2}[theorem1]{\textbf{Theorem}}
\begin{theorem2}
If $(\overline{\boldsymbol{\epsilon}}^*,\overline{\boldsymbol{\lambda}}^*,\overline{\boldsymbol{\rho}}^*)\in\chi^*_{\sharp}$, then $\overline{\textbf{i}}$ defined by (\ref{KKT-PRIM}) is a global minimizer of $\mathrm{f}(\textbf{i})$ over $\mathcal{I}_f$ and $(\overline{\boldsymbol{\epsilon}}^*,\overline{\boldsymbol{\lambda}}^*,\overline{\boldsymbol{\rho}}^*)$ is a global maximizer of $\mathrm{f}^d(\boldsymbol{\epsilon}^*,\boldsymbol{\lambda}^*,\boldsymbol{\rho}^*)$ over $\chi^*_{\sharp}$, and
\begin{equation}
\mathrm{f}(\overline{\textbf{i}}) = \min_{\textbf{i}\in\mathcal{I}_f}\mathrm{f}(\textbf{i})
=\max_{(\boldsymbol{\epsilon}^*,\boldsymbol{\lambda}^*,\boldsymbol{\rho}^*)\in\chi^*_{\sharp}}\mathrm{f}^d(\boldsymbol{\epsilon}^*,\boldsymbol{\lambda}^*,\boldsymbol{\rho}^*)
= \mathrm{f}^d(\overline{\boldsymbol{\epsilon}}^*,\overline{\boldsymbol{\lambda}}^*,\overline{\boldsymbol{\rho}}^*).
\end{equation}
\end{theorem2}
\begin{proof}
See Appendix A.
\end{proof}
\subsection{Canonical Dual Problem and Optimality Conditions for JAMSCmin Problem}
The difference between the SUmax and the JAMSCmin problems lies only in their formulation. The procedure and steps of canonical dual transformation for JAMSCmin problem are the same as that followed for the SUmax problem, and therefore, we will not repeat them in this paper. The canonical dual problem associated to the JAMSCmin primal problem (\ref{integer2}) can be obtained as given by
\vspace{-3mm}
\begin{eqnarray}
\max\left\{\mathrm{g}^d(\boldsymbol{\xi}^*,\boldsymbol{\mu}^*,\boldsymbol{\varrho}^*) \quad | \quad (\boldsymbol{\xi}^*,\boldsymbol{\mu}^*,\boldsymbol{\varrho}^*)\in \mathcal{Y}^*_{\sharp}\right\}
\label{can-dual2}
\end{eqnarray}
where $\mathcal{Y}^*_{\sharp}$ is the associated dual feasible space defined as
\vspace{-2mm}
\begin{equation}
\mathcal{Y}^*_{\sharp} = \left\{\left(\boldsymbol{\xi}^*,\boldsymbol{\mu}^*,\boldsymbol{\varrho}^*\right) \in \mathbb{R}^N \times \mathbb{R}^K \times \mathbb{R}^{KMJ} \quad | \quad \boldsymbol{\xi}^*>0\in\mathbb{R}^N,\boldsymbol{\mu}^*>0\in\mathbb{R}^K,\boldsymbol{\varrho}^*>0\in\mathbb{R}^{KMJ}\right\}
\end{equation}
and $\mathrm{g}^d(\boldsymbol{\xi}^*,\boldsymbol{\mu}^*,\boldsymbol{\varrho}^*):
\mathbb{R}^{N}\times\mathbb{R}^{K}\times\mathbb{R}^{KMJ}\rightarrow\mathbb{R}$ is the corresponding canonical dual function defined as follows:
\vspace{-5mm}
\begin{eqnarray}
\mathrm{g}^d(\boldsymbol{\xi}^*,\boldsymbol{\mu}^*,\boldsymbol{\varrho}^*)=-\frac{1}{4}\sum_{k=1}^K\sum_{m=1}^M \sum_{j=1}^J\left\{\frac{\left(\varrho^*_{k,m,j} - C_{k,m,j}- \mu^*_k -\sum_{n=1}^N\xi^*_n B^k_{m,n,j}\right)^2}{\varrho^*_{k,m,j}}\right\}-\sum_{n=1}^N\xi^*_n - \sum_{k=1}^K \mu^*_k
\end{eqnarray}
which is a concave function on $\mathcal{Y}^*_{\sharp}$, and where $C_{k,m,j}$ is used to denote $C_{k,m,j}(P^{max}_k,P_{k,m,j})$.
Moreover, the results on the primal/dual relationship (perfect duality) and the global optimality conditions can be obtained by a similar procedure followed for SUmax problem (i.e., Theorem 4.1, and Theorem 4.2).

Based on the above mathematical analysis, we provide resource allocation (with joint adaptive modulation for JAMSCmin) algorithms in the following section. An adaptive modulation scheme for SUmax problem is also proposed since unlike the JAMSCmin problem, it does not capture the adaptive modulation implicitly in the problem formulation. The proposed adaptive modulation is based on the powers and sub-channels allocated to each user by the proposed resource allocation algorithm.
\section{Resource Allocation and Adaptive Modulation Algorithms}
\subsection{Resource Allocation Algorithm for SUmax}
The proposed algorithm is based on the solution of canonical dual problem which according to theorem 4.2 provides the optimal solution to the primal problem if the given global optimality
conditions are met. Since the dual problem is a concave maximization problem over $\chi^*_{\sharp}$, it is necessary and sufficient to solve the following system of equations for finding the optimal solution\cite{S.Boyd}.
\begin{eqnarray}
\frac{\partial\mathrm{f}^d}{\partial\epsilon^*_n}&=&\sum_{k=1}^K\sum_{j=1}^J\left\{\frac{1}{2\rho^*_{k,j}}\left(U_{k,j} + \rho^*_{k,j} - \lambda^*_k -\sum_{n=1}^N\epsilon^*_n A^k_{n,j}\right)A^k_{n,j}\right\} -1=0, \quad \forall n
\label{Eq-1}\\
\frac{\partial\mathrm{f}^d}{\partial\lambda^*_k}&=&\sum_{j=1}^J\left\{\frac{1}{2\rho^*_{k,j}}\left(U_{k,j} + \rho^*_{k,j} - \lambda^*_k -\sum_{n=1}^N\epsilon^*_n A^k_{n,j}\right)\right\} -1=0, \quad \forall k
\label{Eq-2}\\
\frac{\partial\mathrm{f}^d}{\partial\rho^*_{k,j}}&=&\left(\frac{U_{k,j} - \lambda^*_k -\sum_{n=1}^N\epsilon^*_n A^k_{n,j}}{\rho^*_{k,j}}\right)^2-1=0, \quad \forall k,j
\label{Eq-3}
\end{eqnarray}
We propose a sub-gradient based iterative algorithm for the above system of non-linear equations that is equivalent to solving $\mathrm{f}^d(\boldsymbol{\epsilon}^*,\boldsymbol{\lambda}^*,\boldsymbol{\rho}^*)$ using gradient-decent method\cite{S.Boyd}. The interest of using the sub-gradient method is its ability to use the decomposition technique that allows to simplify the solution by using a distributed method. The iterative algorithm is given in Table \ref{tab:table1} where each of $q$, $s$ and $t$ denotes the iteration number and $\beta_{\boldsymbol{\rho}^*}$, $\beta_{\boldsymbol{\lambda}^*}$ and $\beta_{\boldsymbol{\epsilon}^*}$ denote the step sizes for the sub-gradient update. For an appropriate step size, the sub-gradient method is always guaranteed to converge\cite{S.Boyd}.
The algorithm starts by initializing the variables. Then, for the given ${\boldsymbol{\epsilon}^*}^{(0)}$ and ${\boldsymbol{\lambda}^*}^{(0)}$, the solution to the set of equations (\ref{Eq-3}) i.e., ${\boldsymbol{\rho}^*}^{(q)}$ is obtained in step 1. The operation
\vspace{-5mm}
\begin{eqnarray}
&&{\boldsymbol{\rho}^*}^{(q)}\leftarrow\Pi_{\chi_{\boldsymbol{\rho}^*}}\left({\boldsymbol{\rho}^*}^{(q-1)} + \beta_{\boldsymbol{\rho}^*}\boldsymbol{\zeta}^{(q-1)}\right) :=
\nonumber \\&&
\left\{\begin{array} {l}
{\rho_{k,j}^*}^{(q)}={\rho_{k,j}^*}^{(q-1)}+ \textrm{sgn}\left({\rho_{k,j}^*}^{(q-1)}\right)\eta \quad \textrm{if} \quad ({\boldsymbol{\rho}^*}^{(q-1)} + \beta_{\boldsymbol{\rho}^*}\boldsymbol{\zeta}^{(q-1)}) = 0,\forall k,j \\
 {\rho_{k,j}^*}^{(q)}={\boldsymbol{\rho}^*}^{(q-1)} + \beta_{\boldsymbol{\rho}^*}\boldsymbol{\zeta}^{(q-1)} \quad \quad \quad \quad \quad \quad \quad \quad \quad \textrm{otherwise}.
\end{array}
\right.
\label{Proj-Eqn}
\end{eqnarray}
in step 1 is the projection of $\boldsymbol{\rho}^*$ onto the space $\chi^*_{\boldsymbol{\rho}^*}=\{{\boldsymbol{\rho}^*}\in\mathbb{R}^{KJ} | {\boldsymbol{\rho}^*} \neq 0\}$, since the canonical dual objective function is not defined at ${\boldsymbol{\rho}^*}=0$. In (\ref{Proj-Eqn}), $\textrm{sgn}$ stands for sign/signum function and $0<\eta<<1$. According to the above projection, if the updated value of ${\boldsymbol{\rho}^*}$ in the current iteration occurs to be zero, it is projected to the negative domain if its value was positive in the previous iteration, and vice versa. This projection has no impact on the convergence, since the sign of ${\boldsymbol{\rho}^*}$ does not change the direction of the gradient (see equation (\ref{Eq-3})).
Step 2 finds ${\boldsymbol{\lambda}^*}^{(s)}$ that solves equations' set (\ref{Eq-2}) for the given ${\boldsymbol{\epsilon}^*}^{(0)}$ and ${\boldsymbol{\rho}^*}^{(q)}$. These values of ${\boldsymbol{\rho}^*}^{(q)}$ and ${\boldsymbol{\lambda}^*}^{(s)}$ are then used to solve the set of equations (\ref{Eq-1}) by updating ${\boldsymbol{\epsilon}^*}^{(0)}$ to ${\boldsymbol{\epsilon}^*}^{(t)}$ in step 3. Step 4 checks whether $|\frac{\partial\mathrm{f}^d}{\partial\boldsymbol{\lambda}^*}|\leq
\delta$ for ${\boldsymbol{\rho}^*}^{(q)}$, ${\boldsymbol{\lambda}^*}^{(s)}$ and the updated
${\boldsymbol{\epsilon}^*}^{(t)}$ where $\delta\rightarrow0$ is the stopping criterion for sub-gradient update.
If $|\frac{\partial\mathrm{f}^d}{\partial\boldsymbol{\lambda}^*}|> \delta$, steps 2 through 4 are repeated until
both $|\frac{\partial\mathrm{f}^d}{\partial\boldsymbol{\lambda}^*}|\leq\delta$ and
$|\frac{\partial\mathrm{f}^d}{\partial\boldsymbol{\epsilon}^*}|\leq \delta$. In step 6, $\boldsymbol{\zeta}^{(q)}$
is recomputed for ${\boldsymbol{\rho}^*}^{(q)}$, and the updated ${\boldsymbol{\lambda}^*}^{(s)}$ and
${\boldsymbol{\epsilon}^*}^{(t)}$. If $|\boldsymbol{\zeta}^{(q)}|\leq\delta$, the algorithm is stopped otherwise
steps 1 through 6 are repeated until convergence. The resource allocation vector $\overline{\textbf{i}}$ is
then obtained from the dual optimal solution
$(\overline{\boldsymbol{\epsilon}}^*,\overline{\boldsymbol{\lambda}}^*,\overline{\boldsymbol{\rho}}^*)$ in step 8.
%\begin{comment}
%\newtheorem{Remark1}{\textbf{Remark}}[section]
%\begin{Remark1}
%In order to ensure that the optimal dual solution $(\overline{\boldsymbol{\epsilon}}^*,\overline{\boldsymbol{\lambda}}^*,\overline{\boldsymbol{\rho}}^*) \in\chi^*_{\sharp}$, we perform the projection of $\boldsymbol{\rho}^*$ onto the feasible space $\chi^*_{\sharp}$. A question may arise that why the projection is not performed for $\boldsymbol{\lambda}^*$ and $\boldsymbol{\epsilon}^*$. This is due to the fact that while simulating the algorithm, the positive value of $\boldsymbol{\rho}^*$ ensures the positivity of $\boldsymbol{\lambda}^*$ and $\boldsymbol{\epsilon}^*$ in each iteration.
%\end{Remark1}
\begin{table}[h]
\caption{Resource Allocation Algorithm}
\centering {
\begin{tabular}{ l }
\hline
\textbf{Initialize} $({\boldsymbol{\epsilon}^*}^{(0)},{\boldsymbol{\lambda}^*}^{(0)},{\boldsymbol{\rho}^*}^{(0)}) \in\chi^*_{\sharp}$\\
\textbf{1. Compute} $\boldsymbol{\zeta}^{(q)}={\frac{\partial\mathrm{f}^d}{\partial\boldsymbol{\rho}^*}}|_{{\boldsymbol{\rho}^*}^{(q)}}$. If $|\boldsymbol{\zeta}^{(q)}|\leq\delta$, go to step 2.\\
$\quad\quad\quad\bullet$ Set ${\boldsymbol{\rho}^*}^{(q+1)}\leftarrow\Pi_{\chi_{\boldsymbol{\rho}^*}}\left({\boldsymbol{\rho}^*}^{(q)} + \beta_{\boldsymbol{\rho}^*}\boldsymbol{\zeta}^{(q)}\right)$.\\
$\quad\quad\quad \bullet$ Set $q\leftarrow q+1$, and repeat step 1.\\
%\textbf{2. Project} ${\boldsymbol{\rho}^*}^{(q)}\leftarrow\Pi_{\chi^*_{\sharp}}\left({\boldsymbol{\rho}^*}^{(q-1)} + \mu_{\boldsymbol{\rho}^*}\boldsymbol{\zeta}^{(q-1)}\right)$\\
\textbf{2. Compute} $\boldsymbol{\eta}^{(s)}=\frac{\partial\mathrm{f}^d}{\partial\boldsymbol{\lambda}^*}|_{{\boldsymbol{\lambda}^*}^{(s)}}$. If $|\boldsymbol{\eta}^{(s)}|\leq\delta$, go to step 3.\\
$\quad\quad\quad \bullet$ Set ${\boldsymbol{\lambda}^*}^{(s+1)}\leftarrow\left({\boldsymbol{\lambda}^*}^{(s)} + \beta_{\boldsymbol{\lambda}^*}\boldsymbol{\eta}^{(s)}\right)$.\\
$\quad\quad\quad \bullet$ Set $s\leftarrow s+1$, and repeat step 2.\\
\textbf{3. Compute} $\boldsymbol{\upsilon}^{(t)}=\frac{\partial\mathrm{f}^d}{\partial\boldsymbol{\epsilon}^*}|_{{\boldsymbol{\epsilon}^*}^{(t)}}$. If $|\boldsymbol{\upsilon}^{(t)}|\leq\delta$, go to step 4.\\
$\quad\quad\quad \bullet$ Set ${\boldsymbol{\epsilon}^*}^{(t+1)} \leftarrow \left({\boldsymbol{\epsilon}^*}^{(t)} +\beta_{\boldsymbol{\epsilon}^*}\boldsymbol{\upsilon}^{(t)}\right)$.\\
$\quad\quad\quad\bullet$ Set $t\leftarrow t+1$, and repeat step 3.\\
\textbf{4. Recompute} $\boldsymbol{\eta}^{(s)}=\frac{\partial\mathrm{f}^d}{\partial\boldsymbol{\lambda}^*}|_{{\boldsymbol{\lambda}^*}^{(s)}}$.\\
\textbf{5. Repeat steps 2 through 4 until} $|\boldsymbol{\eta}^{(s)}|\leq\delta$, and $|\boldsymbol{\upsilon}^{(t)}|\leq\delta$\\
\textbf{6. Recompute} $\boldsymbol{\zeta}^{(q)}={\frac{\partial\mathrm{f}^d}{\partial\boldsymbol{\rho}^*}}|_{{\boldsymbol{\rho}^*}^{(q)}}$\\
\textbf{7. Repeat steps 1 through 6 until} $|\boldsymbol{\zeta}^{(q)}|\leq\delta$, $|\boldsymbol{\eta}^{(s)}|\leq\delta$, and $|\boldsymbol{\upsilon}^{(t)}|\leq\delta$\\
\textbf{8. Compute} $\overline{\textbf{i}}$ \textbf{according to (\ref{KKT-PRIM})}.\\
\hline
\end{tabular}
}
\label{tab:table1}
\end{table}

\subsubsection{Adaptive Modulation Scheme for SUmax} By knowing
perfectly the effective SNR of each user from the powers and
sub-channels allocation performed according to the previous
subsection, we propose an adaptive modulation scheme in this
subsection. The proposed adaptive modulation scheme is based on the
criterion of target Target Block Error Rate (BLER) at the receiver
used for the JAMSCmin problem. According to this approach, for a
modulation $m\in M$ to be chosen, the effective SNR of the user
should not be less than a minimum value $\Gamma^*_m$ that guarantees
a target BLER at the receiver. Since the effective SNR of users are
perfectly known from the the powers and sub-channels allocation
performed according to the previous subsection, we adopt the
modulation for each user which maximizes its individual utility.
Thus, depending upon $\gamma^{eff}_k$, the efficient modulation for
user $k$ is determined as follows:
\begin{eqnarray}
m^*(k) = \arg \min_{m \in M} \left\{(\gamma^{eff}_k - \Gamma^*_m)|_{\Gamma^*_m\leq \gamma^{eff}_k}\right\}
\end{eqnarray}
Note that the above approach is similar in spirit to the approach used in \cite{Y.Liu} where adaptive modulation in OFDM system is considered and an efficient constellation is chosen for each sub-channel.
\subsection{Joint Adaptive Modulation and Resource Allocation Algorithm for JAMSCmin}
The dual function $\mathrm{g}^d(\boldsymbol{\xi}^*,\boldsymbol{\mu}^*,\boldsymbol{\varrho}^*)$  is a concave function over $(\boldsymbol{\xi}^*,\boldsymbol{\mu}^*,\boldsymbol{\varrho}^*)\in \mathcal{Y}^*_{\sharp}$. Thus, the corresponding dual problem is a concave maximization problem over $\mathcal{Y}^*_{\sharp}$ where the joint adaptive modulation and resource allocation can be obtained by solving the following set of equations:
\begin{eqnarray}
\frac{\partial\mathrm{g}^d}{\partial\xi^*_n}&=&\sum_{k=1}^K\sum_{m=1}^M\sum_{j=1}^J\left\{\frac{1}{2\varrho^*_{k,m,j}}\left( \varrho^*_{k,m,j} -C_{k,m,j}  - \mu^*_k -\sum_{n=1}^N\xi^*_n B^k_{m,n,j}\right)B^k_{m,n,j}\right\} -1=0, \forall n
\label{Eq-12}\\
\frac{\partial\mathrm{g}^d}{\partial\mu^*_k}&=&\sum_{m=1}^M\sum_{j=1}^J\left\{\frac{1}{2\varrho^*_{k,j}}\left( \varrho^*_{k,m,j} -C_{k,m,j}- \mu^*_k -\sum_{n=1}^N\xi^*_n B^k_{m,n,j}\right)\right\} -1=0, \quad \forall k
\label{Eq-22}\\
\frac{\partial\mathrm{g}^d}{\partial\varrho^*_{k,m,j}}&=&\left(\frac{-C_{k,m,j} - \mu^*_k-\sum_{n=1}^N\epsilon^*_n B^k_{m,n,j}}{\varrho^*_{k,m,j}}\right)^2-1=0, \quad \forall k,m,j
\label{Eq-32}
\end{eqnarray}

A similar procedure of sub-gradient is proposed where an iterative algorithm can be derived that is similar in spirit to that derived for the SUmax problem. Since it uses a similar procedure and has a similar sequence of steps as that for the algorithm given in Table I, the latter can be adopted to the JAMSCmin problem, and we do not reproduce it in this paper.
\vspace{-2mm}
\subsection{Complexity of the algorithm}
\subsubsection{Complexity of the algorithm for SUmax problem}In each iteration for
$\boldsymbol{\rho}^*$, we compute $KJ$ variables. The number of
variables computed in each iteration for $\boldsymbol{\lambda}^*$ is
$K$ and that for $\boldsymbol{\epsilon}^*$ is $N$. Assume that the
number of iterations required for optimal $\boldsymbol{\rho}^*$,
$\boldsymbol{\lambda}^*$ and $\boldsymbol{\epsilon}^*$ are
$I_{\boldsymbol{\rho}^*}$, $I_{\boldsymbol{\lambda}^*}$ and
$I_{\boldsymbol{\epsilon}^*}$ respectively, then the algorithm has
an overall complexity of
$\mathcal{O}(I_{\boldsymbol{\rho}^*}KJ+I_{\boldsymbol{\lambda}^*}K+I_{\boldsymbol{\epsilon}^*}N)$.
%By ignoring the number of iterations, the complexity in terms of
%$K$, $J$ and $N$ becomes $\mathcal{O}(KJ+K+N)$.
\vspace{-1mm}
\subsubsection{Complexity of the algorithm for JAMSCmin problem}
The complexity of the proposed algorithm adopted to the JAMSCmin
problem is $\mathcal{O}(I_{\boldsymbol{\varrho}^*}KMJ+I_{\boldsymbol{\mu}^*}K+I_{\boldsymbol{\xi}^*}N)$ where $I_{\boldsymbol{\varrho}^*}$, $I_{\boldsymbol{\mu}^*}$ and $I_{\boldsymbol{\xi}^*}$ are the numbers of iterations needed for finding the optimal values of $KMJ$ variables $\boldsymbol{\varrho}^*$, $K$ variables $\boldsymbol{\mu}^*$ and the $N$ variables $\boldsymbol{\xi}^*$ respectively.
%\vspace{-4mm}
\subsection{On the Optimality of the Algorithm}
The canonical dual problem is a concave maximization problem over
$\chi^*_{\sharp}$, the proposed algorithm is then surely optimal if
$(\overline{\boldsymbol{\epsilon}}^*,\overline{\boldsymbol{\lambda}}^*,\overline{\boldsymbol{\rho}}^*)
\in\chi^*_{\sharp}$. However, if
$(\overline{\boldsymbol{\epsilon}}^*,\overline{\boldsymbol{\lambda}}^*,\overline{\boldsymbol{\rho}}^*)$
is not inside the positive cone $\chi^*_{\sharp}$, then the
canonical problem is not guaranteed to be concave. Consequently, the
proposed algorithm may not find the optimal solution. From
our simulation results, we have observed that for moderate number of
sub-channels the proposed algorithm works well, and the canonical
dual solution is very close to the optimal solution.

In this subsection, we analyze the gap between the optimal solution and the solution obtained by using our proposed sub-gradient based algorithm. We perform the analysis for SUmax problem which is equally applicable to the JAMSCmin problem, and we will not repeat it in this paper. We
start the analysis by introducing a modified problem whose optimal solution is not necessary and will not replace our actual problem but is used only to study the optimality gap of our proposed algorithm. In our analysis, first we find the solution of the modified problem (which is a stationary point and may not be necessarily the optimal solution of this modified problem). Then, we show in Theorem \ref{Opt-Theo1} that there exist a primal problem with a slightly different values of the utilities $U_{k,j}$'s whose optimal solution is equal to the solution of this modified problem. Finally, in Corollary \ref{Corollary} we show that under certain conditions, the solution of the canonical dual problem obtained using the algorithm in Table \ref{tab:table1} provides solution to the primal problem which is very close to optimal solution.
%We start the analysis by providing the following results on the
%positivity of $\boldsymbol{\rho}^*$ obtained by our sub-gradient
%based algorithm.
%\newtheorem{Lamma3}{\textbf{Lemma}}[section]
%\begin{Lamma3}
%The algorithm in Table \ref{tab:table1} ensures that
%$\overline{\boldsymbol{\rho}}^*$ lies in the positive cone
%$\chi^*_{\sharp}$.
%\end{Lamma3}
%\begin{proof}
%See Appendix C.
%\end{proof}
%
Let us consider the following modified problem
\vspace{-4mm}
\begin{eqnarray}
(\mathcal{P}1):\text{ }\max_{\boldsymbol{\epsilon}^*,\boldsymbol{\lambda}^*,\boldsymbol{\rho}^*}\mathrm{f}^d(\boldsymbol{\epsilon}^*,\boldsymbol{\lambda}^*,\boldsymbol{\rho}^*)
\label{New-Conc-Prob}
\end{eqnarray}
\[\hspace{-7mm}\text{s.t.}\quad\boldsymbol{\epsilon}^* \geq \mathbf{c}\tag{\ref{New-Conc-Prob}a}\vspace{-6mm}\]
\[\quad\boldsymbol{\lambda}^* \geq \mathbf{d} \tag{\ref{New-Conc-Prob}b}\vspace{-4mm}\]
%\[\quad \boldsymbol{\rho}^* >0 \tag{\ref{New-Conc-Prob}c}\]
where $(\mathbf{c},\mathbf{d})\in (\mathbb{R}^N_+,\mathbb{R}^K_+)$.
We solve this problem using the standard Lagrangian technique.
Let $(\boldsymbol{\epsilon}^*,\boldsymbol{\lambda}^*,\boldsymbol{\rho}^*)$
be the obtained solution. The corresponding Lagrangian can be
defined as
\begin{equation}
\overline{L}(\boldsymbol{\epsilon}^*,\boldsymbol{\lambda}^*,\boldsymbol{\rho}^*,\boldsymbol{\sigma}^{\boldsymbol{\epsilon}^*},\boldsymbol{\sigma}^{\boldsymbol{\lambda}^*},\boldsymbol{\sigma}^{\boldsymbol{\rho}^*})
=\mathrm{f}^d(\boldsymbol{\epsilon}^*,\boldsymbol{\lambda}^*,\boldsymbol{\rho}^*)
-
({\boldsymbol{\epsilon}^*}^T-\mathbf{c}^T)\boldsymbol{\sigma}^{\boldsymbol{\epsilon}^*}-\boldsymbol{\sigma}^{\boldsymbol{\lambda}^*}({\boldsymbol{\lambda}^*}^T-\mathrm{d}^T)
\end{equation}
where
$(\boldsymbol{\sigma}^{\boldsymbol{\epsilon}^*},\boldsymbol{\sigma}^{\boldsymbol{\lambda}^*})\in(\mathbb{R}^N,\mathbb{R}^K)$
are the Lagrange multipliers associated to the constraints
(\ref{New-Conc-Prob}a-\ref{New-Conc-Prob}b) respectively. The
corresponding KKT conditions are:
\begin{eqnarray}
\frac{\partial \overline{L}}{\partial\boldsymbol{\epsilon}^*}&=& 0
\quad \Rightarrow
\quad\sum_{k=1}^K\sum_{j=1}^J\left\{\frac{1}{2\rho^*_{k,j}}\left(U_{k,j}
+ \rho^*_{k,j} - \lambda^*_k -\sum_{n=1}^N\epsilon^*_n
A^k_{n,j}\right)A^k_{n,j}\right\}
=1+\sigma^{\boldsymbol{\epsilon}^*}_{n}, \forall n
\label{New-KKT-1}\\
\frac{\partial \overline{L}}{\partial\boldsymbol{\lambda}^*}&=& 0
\quad \Rightarrow \quad
\sum_{j=1}^J\left\{\frac{1}{2\rho^*_{k,j}}\left(U_{k,j} +
\rho^*_{k,j} - \lambda^*_k -\sum_{n=1}^N\epsilon^*_n
A^k_{n,j}\right)\right\} =1+\sigma^{\boldsymbol{\lambda}^*}_k, \quad
\forall k
\label{New-KKT-2}\\
\frac{\partial \overline{L}}{\partial\boldsymbol{\rho}^*}&=& 0 \quad
\Rightarrow \quad\left(\frac{U_{k,j} - \lambda^*_k
-\sum_{n=1}^N\epsilon^*_n A^k_{n,j}}{\rho^*_{k,j}}\right)^2-1=0,
\quad \forall k,j \label{New-KKT-3}
\end{eqnarray}
The above equation can be solved using the sub-gradient based
algorithm in Table \ref{tab:table1}.
Moreover, in order to ensure that the solution of (\ref{New-KKT-3}) is obtained for positive $\boldsymbol{\rho}^*$, we can use the following projection in the update of $\boldsymbol{\rho}^*$:
\vspace{-6mm}
\begin{eqnarray}
&&{\boldsymbol{\rho}^*}^{(q)}\leftarrow\Pi_{\chi_{\boldsymbol{\rho}^*}}\left({\boldsymbol{\Phi}}^{(q-1)}\right) :=
\nonumber \\&&
\left\{\begin{array} {l}
{\rho_{k,j}^*}^{(q)}=\arg \min_{\rho^*_{k,j}\in\chi^*_{\sharp}}\|\Phi^{(q-1)}_{k,j}-{\rho_{k,j}^*}\|\quad \textrm{if} \quad {\Phi_{k,j}}^{(q-1)} \leq 0,\forall k,j \\
 {\rho_{k,j}^*}^{(q)}={\Phi_{k,j}}^{(q-1)}  \quad \quad \quad \quad \quad \quad \quad \quad \quad \quad \textrm{otherwise}.
\end{array}
\right.
\label{Proj-Eqn2}
\end{eqnarray}
where $\boldsymbol{\Phi}^{(q-1)}={\boldsymbol{\rho}^*}^{(q-1)}+\beta_{\boldsymbol{\rho}^*}\frac{\partial \overline{L}}{\partial\boldsymbol{\rho}^*}|_{{\boldsymbol{\rho}^*}^{(q-1)}}$ denotes the sub-gradient update, and where $\beta_{\boldsymbol{\rho}^*}$ is the step size. The above projection ensures the positivity of $\boldsymbol{\rho}^*$.
%Note that according to Lemma
%5.1, $\boldsymbol{\rho}^*$ is positive.
\newtheorem{Theorem5}{\textbf{Theorem}}[section]
\begin{Theorem5}
\label{Opt-Theo1}
For $\widetilde{U}_{k,j}=U_{k,j}-2\theta_{k,j}\rho^*_{k,j}$ with
$\theta_{k,j} = \{-1,0,1\}$, $\forall k,j$; there exists a primal problem $\tilde{\mathrm{f}}(\mathbf{i})$ with utilities
$\widetilde{U}_{k,j}$ replaced for $U_{k,j}$ that can be solved optimally using the algorithm in Table I. The solution ($\overline{\boldsymbol{\epsilon}}^*$,$\overline{\boldsymbol{\lambda}}^*$,$\overline{\boldsymbol{\rho}}^*$) of $\tilde{\mathrm{f}}(\mathbf{i})$ obtained using Table I is equal to the solution of the modified problem (\ref{New-Conc-Prob}).
\end{Theorem5}
\begin{proof}
See Appendix B for the proof.
\end{proof}
Moreover, we have the following result which is the corollary of Theorem 5.1.
\newtheorem{Corollary}{\textbf{Corollary}}[section]
\begin{Corollary}
\label{Corollary}
If ${\rho}_{k,j}^*<<U_{k,j},\forall k,j$; then the solution of the canonical dual problem
obtained using the sub-gradient based algorithm (Table I) provides a solution to the primal problem which is very close to the optimal solution.
\end{Corollary}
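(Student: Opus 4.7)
The plan is to deduce this corollary as a perturbation/continuity consequence of Theorem 5.1 rather than re-running the whole canonical-dual machinery. Theorem 5.1 already tells us that the triple $(\overline{\boldsymbol{\epsilon}}^*,\overline{\boldsymbol{\lambda}}^*,\overline{\boldsymbol{\rho}}^*)$ returned by the sub-gradient algorithm in Table I is the \emph{exact} optimizer of a perturbed primal $\tilde{\mathrm{f}}(\mathbf{i})$ whose utilities are $\widetilde{U}_{k,j}=U_{k,j}-2\theta_{k,j}\rho^*_{k,j}$ with $\theta_{k,j}\in\{-1,0,1\}$. So the task reduces to bounding the discrepancy between the minimizers/values of $\mathrm{f}(\mathbf{i})$ and $\tilde{\mathrm{f}}(\mathbf{i})$ when the coefficient-wise perturbation is small.

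First I would quantify the perturbation. Since $|\theta_{k,j}|\leq 1$, the hypothesis $\rho^*_{k,j}\ll U_{k,j}$ gives $|\widetilde{U}_{k,j}-U_{k,j}|\leq 2\rho^*_{k,j}\ll U_{k,j}$ for every $(k,j)$. Because both $\mathrm{f}$ and $\tilde{\mathrm{f}}$ are linear in $\mathbf{i}$ on the same feasible set $\mathcal{I}_f$, for any feasible $\mathbf{i}$ we have
\begin{equation}
|\mathrm{f}(\mathbf{i})-\tilde{\mathrm{f}}(\mathbf{i})| \;=\; \Bigl|\sum_{k,j} i_{k,j}\bigl(U_{k,j}-\widetilde{U}_{k,j}\bigr)\Bigr| \;\leq\; 2\sum_{k,j} i_{k,j}\,\rho^*_{k,j} \;\ll\; \sum_{k,j} i_{k,j}\,U_{k,j}.
\end{equation}
In particular, using constraint (\ref{integer}b) each user contributes only one pattern, so the total perturbation is at most $2\sum_{k}\max_{j}\rho^*_{k,j}$, which is small relative to $|\mathrm{f}(\mathbf{i})|$ under the stated hypothesis.

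Next I would transfer this uniform closeness of the objectives to closeness of the optimal values via the standard sandwich argument. Let $\overline{\mathbf{i}}$ be the output of the algorithm (which, by Theorem 5.1, exactly minimizes $\tilde{\mathrm{f}}$ over $\mathcal{I}_f$) and let $\mathbf{i}^\star$ be a true minimizer of $\mathrm{f}$ over $\mathcal{I}_f$. Then
\begin{equation}
\mathrm{f}(\overline{\mathbf{i}}) \;\leq\; \tilde{\mathrm{f}}(\overline{\mathbf{i}}) + \varepsilon \;\leq\; \tilde{\mathrm{f}}(\mathbf{i}^\star) + \varepsilon \;\leq\; \mathrm{f}(\mathbf{i}^\star) + 2\varepsilon,
\end{equation}
where $\varepsilon := 2\sum_{k}\max_{j}\rho^*_{k,j}$ is the uniform bound from the previous step. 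The first and third inequalities use the perturbation bound on the objective, while the middle inequality uses optimality of $\overline{\mathbf{i}}$ for $\tilde{\mathrm{f}}$. Hence $\mathrm{f}(\overline{\mathbf{i}}) - \min_{\mathbf{i}\in\mathcal{I}_f}\mathrm{f}(\mathbf{i}) \leq 2\varepsilon$, which is negligible compared to the magnitude of $\mathrm{f}(\mathbf{i}^\star)$ whenever $\rho^*_{k,j}\ll U_{k,j}$, establishing the corollary.

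The main obstacle I anticipate is not the two displayed inequalities themselves, which are essentially immediate, but rather justifying that ``close in objective value'' legitimately translates to ``close to the optimal solution'' in the combinatorial sense this paper cares about. Because the feasible set $\mathcal{I}_f$ is discrete, a small objective gap does not automatically imply the same pattern/user assignment is recovered; one must either (i) be content with near-optimality of the value (which is the natural reading of ``very close to the optimal solution'' here), or (ii) argue that under a non-degeneracy gap between the best and second-best feasible assignments, a sufficiently small $\varepsilon$ forces $\overline{\mathbf{i}} = \mathbf{i}^\star$ exactly. I would follow route (i) in the main proof and mention (ii) as a remark, since quantifying the minimal gap between assignments would require extra hypotheses on the utilities that the statement does not assume.
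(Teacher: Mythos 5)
Your proposal is correct and follows the same basic strategy as the paper's own proof: both arguments lean on Theorem 5.1 to identify the algorithm's output as the exact optimizer of the perturbed primal with utilities $\widetilde{U}_{k,j}=U_{k,j}-2\theta_{k,j}\rho^*_{k,j}$, and then argue that a small perturbation of the utilities yields a near-optimal solution of the original problem. Where you genuinely go beyond the paper is in making the second step quantitative. The paper's appendix argument is a purely qualitative chain ($\widetilde{U}_{k,j}\approx U_{k,j}$ implies $\tilde{\mathrm{f}}\approx\mathrm{f}$ implies the solutions are close) and never actually justifies the transfer from closeness of objectives to closeness of optimal values; your three-link sandwich inequality $\mathrm{f}(\overline{\mathbf{i}})\leq\tilde{\mathrm{f}}(\overline{\mathbf{i}})+\varepsilon\leq\tilde{\mathrm{f}}(\mathbf{i}^\star)+\varepsilon\leq\mathrm{f}(\mathbf{i}^\star)+2\varepsilon$ supplies exactly the missing step, and your bound $\varepsilon=2\sum_k\max_j\rho^*_{k,j}$ correctly exploits constraint (\ref{integer}b) that each user selects a single pattern. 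Your closing caveat is also well taken and absent from the paper: on the discrete set $\mathcal{I}_f$, a small optimality gap in value does not force recovery of the optimal assignment itself, so ``very close to the optimal solution'' can only honestly mean near-optimality in objective value unless one additionally assumes a separation between the best and second-best assignments. The only point to keep in mind is that $\theta_{k,j}$ and $\rho^*_{k,j}$ are determined a posteriori by the algorithm's output, so the perturbed problem is defined relative to that output; this is harmless for the sandwich argument but worth stating explicitly, since it means $\varepsilon$ is not known in advance of running the algorithm.
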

\begin{proof}
See Appendix C for the proof.
\end{proof}
\subsubsection{Analysis of the algorithm's results for $N\rightarrow\infty$}
It can be seen from the KKT equation (\ref{Eq-2}) that $U_{k,j}-\lambda^*_k -\sum_{n=1}^N\epsilon^*_n A^k_{n,j}=\rho^*_{k,j}$ when a pattern $j$ is allocated to user $k$, and $U_{k,j}-\lambda^*_k -\sum_{n=1}^N\epsilon^*_n A^k_{n,j}= - \rho^*_{k,j}$, otherwise. When the number of sub-channel is very high, there are several patterns that have nearly equal utilities $U_{k,j}$'s for each user. This is due to the fact that for high number of sub-channels, the per sub-channel utility will be very small, and since the difference of sub-channels in the patterns with high number of sub-channels will be less, their utilities will have very small difference.
%In this case the probability of allocating a pattern with small number of channels is very less.
Furthermore, the difference between the summation term $\sum_{n=1}^N\epsilon^*_n A^k_{n,j}$ for several patterns of user $k$ will be very small. This means that the term $U_{k,j}-\lambda^*_k -\sum_{n=1}^N\epsilon^*_n A^k_{n,j}$ for several patterns of user $k$ will be nearly equal, as $\lambda^*_k$ is the same for all the patterns of that user.

Let us assume that a pattern $j$ is allocated to user $k$. Consequently, $U_{k,j}-\lambda^*_k -\sum_{n=1}^N\epsilon^*_n A^k_{n,j}=\rho^*_{k,j}$ whereas $U_{k,j'}-\lambda^*_k -\sum_{n=1}^N\epsilon^*_n A^k_{n,j'}=-\rho^*_{k,j'}$ for all $j'\neq j$. Moreover, in view of the above discussion, the difference between $U_{k,j}-\lambda^*_k -\sum_{n=1}^N\epsilon^*_n A^k_{n,j}$ and $U_{k,j'}-\lambda^*_k -\sum_{n=1}^N\epsilon^*_n A^k_{n,j'}$ will be very small for $j$ and $j'$ with high number of sub-channels.
Thus, being equal to $U_{k,j}-\lambda^*_k -\sum_{n=1}^N\epsilon^*_n A^k_{n,j}$ and $-U_{k,j'}+\lambda^*_k +\sum_{n=1}^N\epsilon^*_n A^k_{n,j'}$ respectively, $\rho^*_{k,j}$ and $\rho^*_{k,j'}$ will both be very small compared to $U_{k,j}$ and $U_{k,j'}$ respectively.
%
%then you add a subsubsection to discuss the case when the number of
%subcarriers tends to infinity, in this case several patterns have
%similar U for each user. then the difference U-lamda-sum (epsilon)
%is small because one pattern is allocated then for this pattern
%U-lamda-sum epsilon =rho and for the other pattern U-lamda-sum
%epsilon = -rho. Then the value of rho is small (for the original
%problem with U and the problem with $\widetilde{U}$). This section
%is a disccusion and not theorem since the analysis is more
%qualitative.

%\newtheorem{Theorem5}{\textbf{Theorem}}[section]
%\begin{Theorem5}
%If $\overline{\boldsymbol{\rho}}^*\in\chi^*_{\sharp}$ such that
%$\overline{\rho}_{k,j}^*<<U_{k,j},\forall k,j$; then the algorithm
%in Table \ref{tab:table1} converges for the solution of the
%canonical dual problem which provides virtually optimal solution to
%the primal problem.
%\end{Theorem5}
%\begin{proof}
%See Appendix D for the proof.
%\end{proof}

\section{Numerical Analysis and Discussion}
We consider a system with 5MHz of bandwidth (i.e. LTE) divided into
$N=25$ sub-channels each having a bandwidth of 180kHz. We assume that $K=10$
uniformly distributed users are simultaneously active in the cell. The scenario
assumed is urban canyon macro which exists in dense urban areas
served by macro-cells. A frequency selective Rayleigh fading channel
is simulated where the channel gain has a small-scale Rayleigh
fading component and a large-scale path loss and shadowing
component. Path losses are calculated according to Cost-Hata Model
and shadow fading is log-normally distributed with a standard
deviation of 8dBs. The power spectral density
of noise is assumed to be -174dBm/Hz.
\subsection{Sum-utility maximization}
In simulations, we assume that the utility of the user is equal to its weighted rate where the rate is defined by Shannon's formula. In other words, the SUmax problem is equivalent to weighted-sum rate maximization. Fig. 1 plots the empirical cumulative distribution function (CDF) of sum-utility for different resource allocation algorithms. The figure illustrates the comparison of the CDF's corresponding to our proposed algorithm, both the binary-integer programming solution and the greedy algorithm proposed in \cite{I.C.Wong}, and the round robin scheme in which an equal number of consecutive sub-channels are allocated to each user in turn. The figure shows that although the greedy algorithm proposed by Wong et al. is efficient in comparison to the round robin scheme, its performance is far away from the proposed solution. Moreover, it can be seen from the figure that the results of the proposed algorithm are very close to that obtained by solving the binary-integer program which is the optimal solution.
\subsection{Joint Adaptive Modulation and Resource Allocation}
The minimum effective SNR for each modulation $\Gamma^*_m$ that ensures a target Block Error Rate at the receiver is
determined from the link-level performance curves (e.g., see\cite{3GPP.Stand3}). Fig. 2 displays the empirical CDF of
sum-cost for different algorithms when sum-cost minimization based resource allocation (RA) is performed joint with
and without adaptive modulation (AM). The figure illustrates the comparison of the CDF's corresponding to our proposed
resource allocation algorithm when joint AM and RA is performed and when RA is performed without AM, the
binary-integer programming (BIP) based solution adopted to joint AM and RA problem, and RA without AM, and the
round robin scheme in which an equal number of consecutive sub-channels are allocated to each user in turn and minimum
possible power is allocated to the users while ensuring their target data rates. The round robin scheme is used as a
baseline scheme for comparison. The RA without AM scenario considers 16QAM as the modulation scheme.
%We can see that
The proposed RA with fixed modulation outperforms the round robin scheme which is not unexpected.
The figure shows that the joint AM and RA
%not only significantly outperforms the round robin scheme but also
results in a significant performance improvement over the RA without AM. The performance of the proposed algorithm can be depicted from the fact that the results of the proposed algorithm nearly overlap with that of the BIP based solution both for joint AM and RA, and RA with fixed modulation scheme. We recall that the BIP based solution is the optimal solution.
\section{Conclusion}
This paper studies resource allocation and adaptive modulation in uplink
SC-FDMA systems. Sum-utility maximization, and joint adaptive modulation and sum-cost minimization problems are considered whose optimal solutions are exponentially complex in general. A polynomial-complexity optimization framework that is inspired from the recently developed canonical duality theory is derived for the solution of both the problems. Based on the resource allocation performed by the proposed framework, an adaptive modulation scheme is also proposed for the sum-utility maximization problem that determines the best constellation for each user. The optimization problems are first formulated as binary-integer programming problems and then, each binary-integer problem is transformed into a canonical dual problem in the continuous space which is a concave maximization problem. The transformation of the problem in continuous space significantly improves the performance of the system in terms of complexity. The proposed continuous space optimization framework has a polynomial complexity that is a significant improvement over exponential complexity. It is proved analytically that that under certain conditions, the solution of the canonical dual problem is identical to the solution of
the primal problem. However, if the dual solution
does not satisfy these conditions then the optimality can not be
guaranteed. Therefore, some bounds on the sub-optimality of the proposed framework when these conditions are not satisfied are also explored. The performance of the proposed canonical dual framework is assessed by comparing it with the existing algorithms in the literature. The numerical results show that the proposed framework provides integer solution to each problem which is very close to optimal.
\begin{appendix}

\subsection{Proof of Theorem 4.2}
The total complementarity function $\Xi(\textbf{i},\boldsymbol{\epsilon}^*,\boldsymbol{\lambda}^*,\boldsymbol{\rho}^*)$ is convex in $\textbf{i}$ and concave (linear) in $\boldsymbol{\epsilon}^*$, $\boldsymbol{\lambda}^*$ and $\boldsymbol{\rho}^*$. Therefore, the stationary point $(\overline{\textbf{i}},\overline{\boldsymbol{\epsilon}}^*,\overline{\boldsymbol{\lambda}}^*,\overline{\boldsymbol{\rho}}^*)$ is a saddle point of $\Xi(\textbf{i},\boldsymbol{\epsilon}^*,\boldsymbol{\lambda}^*,\boldsymbol{\rho}^*)$. Furthermore,  $\mathrm{f}^d(\boldsymbol{\epsilon}^*,\boldsymbol{\lambda}^*,\boldsymbol{\rho}^*)$ is defined by $\Xi(\overline{\textbf{i}},\boldsymbol{\epsilon}^*,\boldsymbol{\lambda}^*,\boldsymbol{\rho}^*)$ with $\overline{\textbf{i}}$ being a stationary point of $\Xi(\textbf{i},\boldsymbol{\epsilon}^*,\boldsymbol{\lambda}^*,\boldsymbol{\rho}^*)$ with respect to $\textbf{i}\in \mathcal{I}_a$. Consequently, $\mathrm{f}^d(\boldsymbol{\epsilon}^*,\boldsymbol{\lambda}^*,\boldsymbol{\rho}^*)$ is concave on $\chi^*_{\sharp}$ and the KKT point $(\overline{\boldsymbol{\epsilon}}^*,\overline{\boldsymbol{\lambda}}^*,\overline{\boldsymbol{\rho}}^*)\in\chi^*_{\sharp}$
must be its global maximizer. Thus, by the saddle mini-max theorem:
\begin{eqnarray}
\mathrm{f}^d(\overline{\boldsymbol{\epsilon}}^*,\overline{\boldsymbol{\lambda}}^*,\overline{\boldsymbol{\rho}}^*) &=& \max_{\boldsymbol{\epsilon}^*> 0}\max_{\boldsymbol{\lambda}^*> 0}\max_{\boldsymbol{\rho}^*> 0} \mathrm{f}^d(\boldsymbol{\epsilon}^*,\boldsymbol{\lambda}^*,\boldsymbol{\rho}^*)\nonumber\\
&=& \max_{\boldsymbol{\epsilon}^*> 0}\max_{\boldsymbol{\lambda}^*> 0}\max_{\boldsymbol{\rho}^*> 0} \min_{\textbf{i} \in \mathcal{I}_a }\Xi(\textbf{i},\boldsymbol{\epsilon}^*,\boldsymbol{\lambda}^*,\boldsymbol{\rho}^*)\nonumber\\
&=& \max_{\boldsymbol{\epsilon}^*> 0}\max_{\boldsymbol{\lambda}^*> 0}\max_{\boldsymbol{\rho}^*> 0} \min_{\textbf{i} \in \mathcal{I}_a}\left\{\mathrm{f}(\textbf{i}) + \boldsymbol{\epsilon}^T\boldsymbol{\epsilon}^* +\boldsymbol{\lambda}^T\boldsymbol{\lambda}^* + \boldsymbol{\rho}^T\boldsymbol{\rho}^*\right\}\nonumber
\end{eqnarray}
\begin{eqnarray}
&=& \max_{\boldsymbol{\epsilon}^*> 0}\max_{\boldsymbol{\lambda}^*> 0} \min_{\textbf{i} \in \mathcal{I}_a}\left\{\mathrm{f}(\textbf{i}) + \boldsymbol{\epsilon}^T\boldsymbol{\epsilon}^* +\boldsymbol{\lambda}^T\boldsymbol{\lambda}^* +\max_{\boldsymbol{\rho}^*> 0}\left\{\sum_{k=1}^K\sum_{j=1}^J\rho^*_{k,j}i_{k,j}(i_{k,j}-1)\right\}\right\}\nonumber\\
&=& \max_{\boldsymbol{\epsilon}^*> 0} \min_{\textbf{i} \in \mathcal{I}_a}\left\{\mathrm{f}(\textbf{i}) + \boldsymbol{\epsilon}^T\boldsymbol{\epsilon}^* +\max_{\boldsymbol{\lambda}^*> 0}\left\{\sum_{k=1}^K \lambda^*_k\left(\sum_{j=1}^Ji_{k,j}-1\right)\right\}\right\}\nonumber\\
&&\text{s.t.}\quad i_{k,j}(i_{k,j}-1)=0,\forall k,j\nonumber\\
&=&\min_{\textbf{i} \in \mathcal{I}_a}\left\{\mathrm{f}(\textbf{i}) + \max_{\boldsymbol{\epsilon}^*> 0}\left\{\sum_{n=1}^N\epsilon^*_n\left(\sum_{k=1}^K \sum_{j=1}^Ji_{k,j}A^k_{n,j}-1\right)\right\}\right\} \nonumber\\ &&\text{s.t.}\quad i_{k,j}(i_{k,j}-1)=0, \forall k,j;\quad \sum_{j=1}^Ji_{k,j}=1,\forall k \nonumber\\
&=&\min_{\textbf{i} \in \mathcal{I}_a}\mathrm{f}(\textbf{i}) \quad \text{s.t.}\left\{i_{k,j}(i_{k,j}-1)=0,\forall k,j;\sum_{j=1}^Ji_{k,j}=1, \forall k; \sum_{k=1}^K \sum_{j=1}^Ji_{k,j}A^k_{n,j}=1,\forall n\right\} \nonumber\\
&=&\min_{\textbf{i} \in \mathcal{I}_f}\mathrm{f}(\textbf{i})
\label{proof-th4.2}
\end{eqnarray}
Note that the linear programming
\vspace{-5mm}
\begin{equation}
\max_{\boldsymbol{\rho}^*> 0}\left\{\sum_{k=1}^K\sum_{j=1}^J\rho^*_{k,j}i_{k,j}(i_{k,j}-1)\right\}
\nonumber
\end{equation}
has a finite solution in the open domain $\chi^*_{\sharp}$ if and only if $i_{k,j}(i_{k,j}-1)=0,\forall k,j$. By a similar argument, the solution of $\max_{\boldsymbol{\lambda}^*> 0}\left\{\sum_{k=1}^K \lambda^*_k\left(\sum_{j=1}^Ji_{k,j}-1\right)\right\}$ and $\max_{\boldsymbol{\epsilon}^*> 0}\left\{\sum_{n=1}^N\epsilon^*_n\left(\sum_{k=1}^K \sum_{j=1}^Ji_{k,j}A^k_{n,j}-1\right)\right\}$ leads to the last equation (\ref{proof-th4.2}). This shows that the KKT point $(\overline{\boldsymbol{\epsilon}}^*,\overline{\boldsymbol{\lambda}}^*,\overline{\boldsymbol{\rho}}^*)$ maximizes $\mathrm{f}^d(\boldsymbol{\epsilon}^*,\boldsymbol{\lambda}^*,\boldsymbol{\rho}^*)$ over $\chi^*_{\sharp}$ if and only if $\overline{\textbf{i}}$ is the global minimizer of $\mathrm{f}(\textbf{i})$ over $\mathcal{I}_f$. This completes the proof.

\vspace{-3mm}
\subsection{Proof of Theorem 5.1}

Using sub-gradient method with projection defined by (\ref{Proj-Eqn2}) ensures the positive solution of KKT equation (\ref{New-KKT-3}) which implies that the corresponding $i_{k,j}$ is binary integer. However, respecting the positivity constraint on $\boldsymbol{\lambda}^*$, equation (\ref{New-KKT-2}) can not ensure that a single sub-channel pattern is allocated to each user but $1+\sigma^{\boldsymbol{\lambda}^*}_k$ number of patterns will be allocated to each user $k$. Similarly, ensuring that $\boldsymbol{\epsilon}^*>0$, equation (\ref{New-KKT-1}) means that a sub-channel can be allocated to more than one users.

In the following, we discuss that we can find another approximate problem for which the above KKT equations not only provide binary integer solution but also ensure that a user will be assigned with a single sub-channel pattern and a sub-channel will be allocated to a single user. To this end, we proceed as follows. The KKT equation (\ref{New-KKT-3}) can be written as
\vspace{-3mm}
\begin{equation}
U_{k,j} - \lambda^*_k -\sum_{n=1}^N\epsilon^*_n A^k_{n,j}= \pm \rho^*_{k,j}, \quad \forall k,j
\label{New-KKT-32}
\end{equation}
We introduce $KJ$ new variables $\theta_{k,j}$'s defined as follows
\vspace{-3mm}
\begin{eqnarray}\label{}
   \theta_{k,j}= \left \{ \begin{array}{lll}
     \{1,0\} & \textrm{if $U_{k,j} - \lambda^*_k -\sum_{n=1}^N\epsilon^*_n A^k_{n,j}= - \rho^*_{k,j}$}    \\
     \{-1,0\} &  \textrm{if $U_{k,j} - \lambda^*_k -\sum_{n=1}^N\epsilon^*_n A^k_{n,j}= + \rho^*_{k,j}$}
     \end{array} \right.
\end{eqnarray}
From the above definition of $\theta_{k,j}$, equations (\ref{New-KKT-32}) can be written as
\vspace{-3mm}
\begin{equation}
U_{k,j} -2\theta_{k,j}\rho^*_{k,j}- \lambda^*_k -\sum_{n=1}^N\epsilon^*_n A^k_{n,j}= \pm \rho^*_{k,j}, \quad \forall k,j
\label{New-KKT-33}
\end{equation}
Let $\widetilde{U}_{k,j}=U_{k,j} - 2\theta_{k,j}\rho^*_{k,j}$, then the above equations take the form:
\vspace{-3mm}
\begin{equation}
\widetilde{U}_{k,j}- \lambda^*_k -\sum_{n=1}^N\epsilon^*_n A^k_{n,j}= \pm \rho^*_{k,j}, \quad \forall k,j
\label{New-KKT-33}
\end{equation}
Although the utilities are changed from $U_{k,j}$ to $\widetilde{U}_{k,j}=U_{k,j} - 2\theta_{k,j}\rho^*_{k,j}$, the solution of the above equations provide integer solution to $i_{k,j}$'s. We now apply this change in utilities to the equations (\ref{New-KKT-1}-\ref{New-KKT-2}).
The KKT equations (\ref{New-KKT-2}) can be written as
\vspace{-2mm}
\begin{eqnarray}
\sum_{j=1}^J\left\{\frac{1}{2\rho^*_{k,j}}\left(U_{k,j} - 2\theta_{k,j}\rho^*_{k,j} + 2\theta_{k,j}\rho^*_{k,j} + \rho^*_{k,j} - \lambda^*_k -\sum_{n=1}^N\epsilon^*_n A^k_{n,j}\right)\right\} =1+\sigma^{\boldsymbol{\lambda}^*}_k, \quad \forall k
\label{New-KKT-22}
\end{eqnarray}
Replacing $\widetilde{U}_{k,j}$ for $U_{k,j} - 2\theta_{k,j}\rho^*_{k,j}$, the above equations become:
\vspace{-2mm}
\begin{eqnarray}
\sum_{j=1}^J\left\{\frac{1}{2\rho^*_{k,j}}\left(\widetilde{U}_{k,j}+ \rho^*_{k,j} - \lambda^*_k -\sum_{n=1}^N\epsilon^*_n A^k_{n,j}\right)\right\}+\sum_{j=1}^J\theta_{k,j} =1+\sigma^{\boldsymbol{\lambda}^*}_k, \quad \forall k
\label{New-KKT-22}
\end{eqnarray}
If there exist $\theta_{k,j}$'s such that
$\sum_{j=1}^J\theta_{k,j}=\sigma^{\boldsymbol{\lambda}^*}_k$, then
we have
\vspace{-2mm}
\begin{eqnarray}
\sum_{j=1}^J\left\{\frac{1}{2\rho^*_{k,j}}\left(\widetilde{U}_{k,j}+ \rho^*_{k,j} - \lambda^*_k -\sum_{n=1}^N\epsilon^*_n A^k_{n,j}\right)\right\} =1, \quad \forall k
\label{New-KKT-23}
\end{eqnarray}
This implies that there exist another problem with a different set of utilities for which the above solution ensures that a single pattern will be allocated to each user. By using a similar procedure for the KKT equations (\ref{New-KKT-1}), we get
\vspace{-2mm}
\begin{equation}
\sum_{k=1}^K\sum_{j=1}^J\left\{\frac{1}{2\rho^*_{k,j}}\left(\widetilde{U}_{k,j} + \rho^*_{k,j} - \lambda^*_k -\sum_{n=1}^N\epsilon^*_n A^k_{n,j}\right)A^k_{n,j}\right\} =1, \forall n
\label{New-KKT-12}
\end{equation}
which enures that a sub-channel will be allocated to a single user at most when $\sum_{k=1}^K\sum_{j=1}^J\theta_{k,j}\mathbf{A}^k_{n,j}=\sigma^{\boldsymbol{\epsilon}^*}_n$, and the utilities are changed from $U_{k,j}$ to $\widetilde{U}_{k,j}=U_{k,j} - 2\theta_{k,j}\rho^*_{k,j}$.

The above analysis shows that the solution of the problem
$\mathcal{P}1$, namely
$(\overline{\boldsymbol{\epsilon}}^*,\overline{\boldsymbol{\lambda}}^*,\overline{\boldsymbol{\rho}}^*)$
that lies in the positive cone, is the solution of the above KKT
equations (\ref{New-KKT-33},\ref{New-KKT-23},\ref{New-KKT-12}).
Moreover, the KTT equations
(\ref{New-KKT-33},\ref{New-KKT-23},\ref{New-KKT-12}) give the
stationary point of a slightly modified problem
$\tilde{\mathrm{f}}^d(\boldsymbol{\epsilon}^*,\boldsymbol{\lambda}^*,\boldsymbol{\rho}^*)$
which is the canonical dual of a slightly modified primal problem with
utilities $\widetilde{U}_{k,j}=U_{k,j} - 2\theta_{k,j}\rho^*_{k,j}$.
Since the solution
$(\overline{\boldsymbol{\epsilon}}^*,\overline{\boldsymbol{\lambda}}^*,\overline{\boldsymbol{\rho}}^*)$
is positive, according to Theorems 4.1 and 4.2, the proposed
sub-gradient based solution proposed in Table \ref{tab:table1}
optimally solves a corresponding primal problem with utilities
$\widetilde{U}_{k,j}$'s and an objective function
$\tilde{\mathrm{f}}(\textbf{i})$. Note also that the canonical dual
$\tilde{\mathrm{f}}^d(\boldsymbol{\epsilon}^*,\boldsymbol{\lambda}^*,\boldsymbol{\rho}^*)$
is concave (since the KKT solution is in the positive cone).
However, how far the solution of the modified problem will be from
that of the primal problem (\ref{general}) depends upon the values
of $\rho_{k,j}^*$'s.
\vspace{-3mm}
\subsection{Proof of Corollary 5.1}
If $\rho_{k,j}^*<<U_{k,j},\forall k,j$, then
$\widetilde{U}_{k,j}\approx U_{k,j},\forall k,j$,
$\tilde{\mathrm{f}}(\textbf{i})\approx\mathrm{f}(\textbf{i})$, and
\vspace{-3mm}
\begin{equation}
\max_{(\boldsymbol{\epsilon}^*,\boldsymbol{\lambda}^*,\boldsymbol{\rho}^*)} \tilde{\mathrm{f}}^d \approx \max_{(\boldsymbol{\epsilon}^*,\boldsymbol{\lambda}^*,\boldsymbol{\rho}^*)} \mathrm{f}^d
\end{equation}
For $\rho_{k,j}^*<<U_{k,j},\forall k,j$, the solution of the
equations (\ref{New-KKT-33},\ref{New-KKT-23},\ref{New-KKT-12}) is
very close to that of equations (\ref{Eq-1},\ref{Eq-2}, \ref{Eq-3}).
Furthermore, the solution of
(\ref{New-KKT-33},\ref{New-KKT-23},\ref{New-KKT-12}) is the optimal
solution of the corresponding primal problem with utilities
$\widetilde{U}_{k,j}$ (which is very close to the optimal solution
of the primal problem with utilities $U_{k,j}$). Consequently, the
dual canonical problem obtained using the sub-gradient based
algorithm (Table \ref{tab:table1}) will provide solution to the
primal problem which is very close to the optimal solution. This
completes the proof.
\end{appendix}

\vspace{-5mm}
\begin{figure}[ht]
\centering
\includegraphics[width=.65\textwidth,height=.35\textheight]{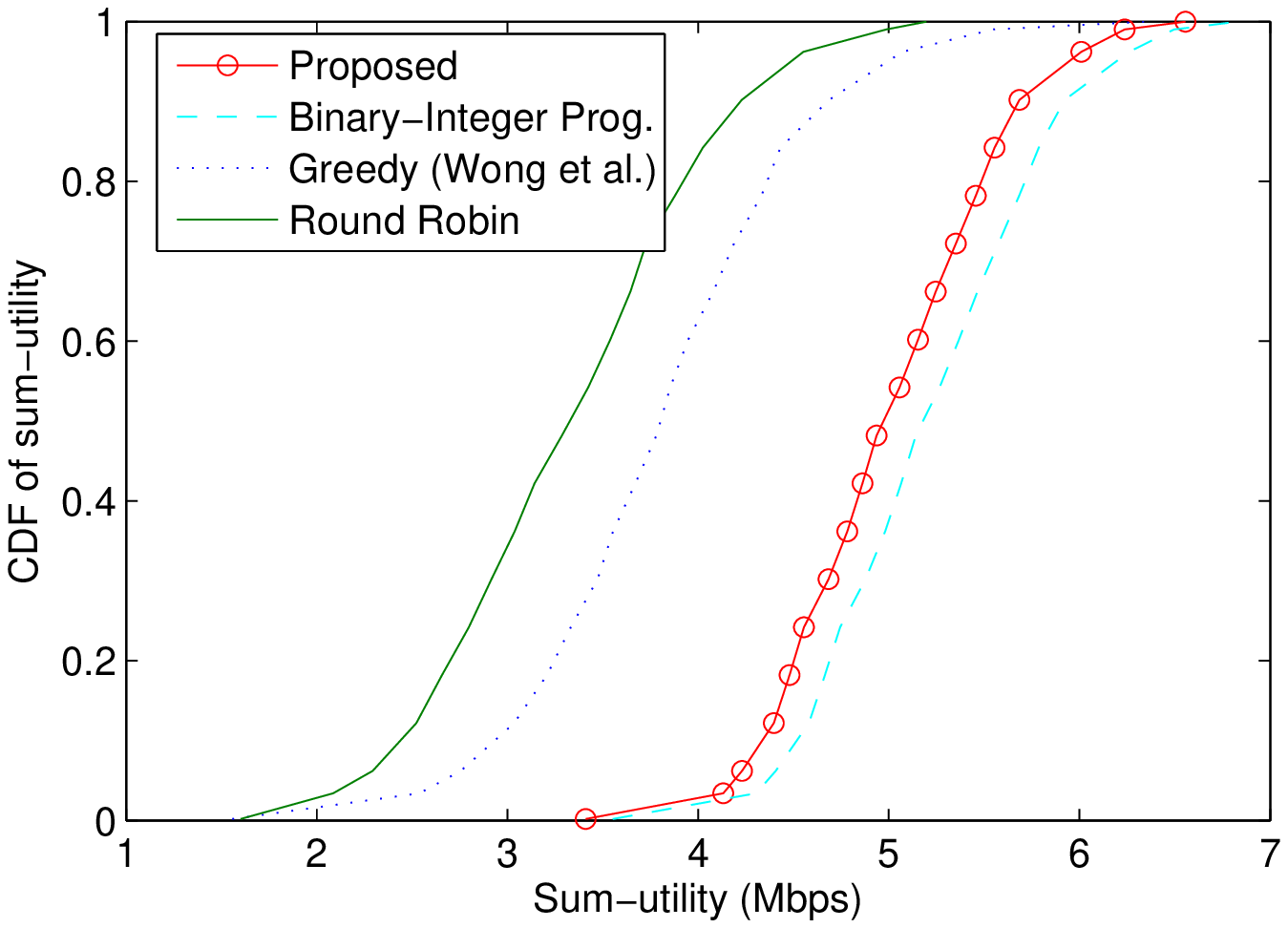}
\vspace{-8mm} \caption{Empirical CDF of sum-utility} \label{fig:Fig.
1}
\end{figure}
\vspace{-8mm}
\begin{figure}[ht]
\centering
\includegraphics[width=.65\textwidth,height=.35\textheight]{}
\vspace{-8mm} \caption{Empirical CDF of sum-cost} \label{fig:Fig. 2}
\end{figure}

\begin{thebibliography}{99}
\bibitem{H.G.Myung1} H. G. Myung, J. Lim, and D. J. Goodman, ``Single carrier FDMA for
uplink wireless transmission," IEEE Veh. Technol. Mag., vol. 1, no. 3, pp. 30-38, 2006.
%\bibitem{IEEE.Stand1} IEEE standard for local and metropolitan area networks, "Part 16: Air interface for fixed broadband wireless access systems", 1October 2004.

%\bibitem{IEEE.Stand2} IEEE standard for local and metropolitan area networks, "Part 16: Air interface for fixed and mobile broadband wireless access systems", 28 February 2006.

\bibitem{3GPP.Stand1} 3rd Generation Partnership Project, Technical Specification Group Radio
Access Network; Physical layer aspects for evolved Universal Terrestrial
Radio Access (UTRA), 3GPP Std. TR 25.814 v. 7.0.0, 2006.

%\bibitem{E.Lawrey} E. Lawrey, "Multiuser OFDM," in Proc. International Symposium on Signal Processing and its Applications, pp. 761-764, Brisbane, Australia, 1999.

\bibitem{C.Y.Wong} C. Y. Wong, R. S. Cheng, K. B. Letaief, and R. D. Murch,
"Multiuser OFDM with Adaptive Subcarrier, Bit, and Power
Allocation," IEEE Journal on Selected Areas in Communications, vol.
17, no. 10, pp. 1747-1758, Oct. 1999.

\bibitem{J.Jang} J. Jang and K. B. Lee, "Transmit Power Adaptation for Multiuser
OFDM Systems", IEEE Journal on Selected Areas in Communications,
vol. 21, no. 2, pp. 171-178, Feb. 2003.

\bibitem{W.Rhee} W. Rhee and J. M. Cioffi, "Increasing in Capacity of Multiuser
OFDM System Using Dynamic Subchannel Allocation," in Proc. IEEE Int.
Vehicular Tech. Conf., vol. 2, pp. 1085-1089, Spring 2000.

%\bibitem{} Zukang Shen, Jeffrey G. Andrews and Brian L. Evans, "Adaptive Resource Allocation in Multiuser OFDM Systems with Proportional Fairness," submitted to IEEE Trans. on Wireless Communications.

\bibitem{I.Kim} I. Kim, H. L. Lee, B. Kim, and Y. H. Lee, "On the Use of Linear
Programming for Dynamic Subchannel and Bit Allocation in Multiuser
OFDM," in Proc. IEEE Global Communications Conf., vol. 6, pp.
3648-3652, 2001.

%\bibitem{}  J. Gross, H. Karl, F. Fitzek, and A. Wolisz. Comparison of heuristic and optimal subcarrier assignment algorithms. Proc. of Intl. Conf. on Wireless Networks (ICWN), June 2003.

%\bibitem{} J. Gross, J. Klaue, H. Karl, and A. Wolisz. Subcarrier allocation for variable bit rate video streams in wireless OFDM systems. Proc. of Vehicular Technology Conference (VTC), Florida, October 2003.

%\bibitem{Ahmed.K.F} Ahmed K. F. Khattab and Khaled M. F. Elsayed, "Opportunistic Subcarrier Management for Delay Sensitive Traffic in OFDMA-based Wireless Multimedia Networks", Proc. of the 14th IST Mobile \& Wireless Communications Summit, Dresden, Germany, June 2005.

%\bibitem{Ahmed.K.F} Ahmed K. F. Khattab and Khaled M. F. Elsayed, "Opportunistic Scheduling of Delay Sensitive Traffic in OFDMA-based Wireless Networks, " Proc. of IEEE WoWMoM 2006, Buffalo-NY, June 2006.

%\bibitem{} A. Pokhariyal, T.E. Kolding and P.E. Mogensen, "Performance of Downlink Frequency Domain Packet Scheduling for the UTRAN Long Term Evolution," Proc. Of IEEE PIMRC 2006, Septembre 2006, Helsinki.

%\bibitem{Lu.Yanhui} Lu Yanhui et. al., "Downlink Scheduling and Radio Resource Allocation in Adaptive OFDMA Wireless Communication Systems for User-Individual QoS," ENFORMATIKA Trans. On Engineering and Technology, V.12, March 2006, ISSN 1305-5313
\bibitem{K.Seong} K. Seong, M. Mohseni, and J. Cioffi, ``Optimal resource allocation for OFDMA downlink systems," in Proc. IEEE ISIT, Seattle, WA, July 2006.
\bibitem{W.Yu} W.Yu and R.Lui, ``Dual methods for non-convex spectrum optimization of multi-carrier systems," IEEE Trans. Comm. vol.54, pp 1310-1322, July 2006.

%\bibitem{} S. Shakkottai and A.L. Stolyar. Scheduling algorithms for a mixture of real-time and non-real-time data in HDR. Proc. Of the 17th International Teletraffic Congress - ITC-17, Salvador da Bahia, Brazil, 793-804, September, 2001

%\bibitem{H.G.Myung} H. G. Myung, et al., ``Peak-to-average power ratio of single carrier FDMA signals with pulse shaping," in Proc. IEEE PIMRC'06

%\bibitem{M.Schnell} M. Schnell, I. D. Broeck, ``Application of IFDMA to mobile radio transmission," in Proc. IEEE ICUPC'98, vol. 2, pp. 1267-1272, Oct. 1998

\bibitem{R.Dinis} R. Dinis, et al., ``A multiple access scheme for the uplink of broadband wireless systems," in Ptoc. IEEE GLOBECOM'04, vol. 6, pp. 3808-3812, Dec. 2004
\bibitem{T.Shi} T. Shi, ``Capacity of single carrier systems with frequency-domain equalization," in Proc. IEEE CASSET'04, vol. 2, pp. 429-432, June 2004
\bibitem{3GPP.Stand2} 3GPP TSG-RAN, ``Simulation methodology for EUTRA UL: IFDMA and DFT-Spread-OFDMA," WG1 \#42, R1-050718, Sept. 2005
\bibitem{M.Al-Rawi} M. Al-Rawi, R. Jantti, J. Torsner and M. Sagfors, ``Opportunistic Uplink
Scheduling for 3G LTE Systems," in Proc. 4th IEEE Innovations in Information Technology (Innovations07), 2007


%\bibitem{3GPP.Stand3} 3GPP TSG-RAN, ``System-Level Evaluation of OFDM - Further Considerations", WG1 \#35, R1-031303, Nov. 2003


\bibitem{J.Lim1} J. Lim, H. G. Myung, K. Oh and D. J. Goodman, ``Channel Dependent Scheduling of Uplink Single Carrier FDMA Systems''IEEE 64th Veh. Tech. Conf., VTC-2006 Fall. Sept. 2006 pp. 1-5
\bibitem{J.Lim2} J. Lim, H. G. Myung, K. Oh, and D. J. Goodman, ``Proportional fair scheduling of uplink single-carrier FDMA systems," in Proc. IEEE PIMRC, Helsinki, Finland, Sept. 2006, pp. 1-6.
\bibitem{S.Lee} S. Lee , I. Pefkianakis, A. Meyerson, S. Xu and S. Lu, ``Proportional Fair Frequency-Domain Packet Scheduling for 3GPP LTE Uplink," in Proc. IEEE INFOCOM 09, Rio de Janeiro, Brazil, April 2009.
\bibitem{I.C.Wong} I. C. Wong, O. Oteri, W. McCoy, ``Optimal Resource Allocation in uplink SC-FDMA Systems'' IEEE Trans. on Wireless Com. Vol. 8 No. 5. May 2009
\bibitem{R.M.Karp} R. M. Karp, ``Reducibility Among Combinatorial Problems," In R. E. Miller and J. W. Thatcher (editors). Complexity of Computer Computations. New York: Plenum, 1972, pp. 85-103
\bibitem{D.Y.Gao} D. Y. Gao, \emph{Duality Principles in Nonconvex Systems: Theory, Methods and Applications}, Kluwer Academic Publishers, Dordrecht / Boston / London, 2000.
\bibitem{S.C.Fang} S.-C. Fang, D. Y. Gao, R. L. Sheu, S.-Y. Wu, ``Canonical dual approach to solving 0-1 quadratic programming problems," J. Ind. Mang. Optim. Vol. 4, No. 1, Feb. 2008, pp. 125-142.
\bibitem{Y.Liu} Y. Liu, Q. Ma, and H. Zhang, ``Power allocation and adaptive modulation for OFDM systems with imperfect CSI," in Proc. IEEE VTC-2009 Spring, Barcelona, Spain, April 2009
\bibitem{S.Boyd} S. Boyd and L. Vandenberghe, \emph{Convex Optimization}, Cambridge University Press, 2004.
\bibitem{3GPP.Stand3} 3rd Generation Partnership Project, Technical Specification Group Radio
Access Network; Feasibility study for Orthogonal Frequency Division Multiplexing (OFDM) for UTRAN enhancement (Release 6), 3GPP Std. TR 25.892 v. 6.0.0, 2004
%\bibitem{R.Aggarwal1} R. Aggarwal, M. Assaad, C. E. Koksal and P. Schniter, ``OFDMA downlink resource allocation via ARQ feedback," in Proc. IEEE ASILOMAR 09, Nov. 2009
%\bibitem{R.Aggarwal2} R. Aggarwal, P. Schniter, and C. E. Koksal, ``Rate adaptation via linklayer feedback for goodput maximization over a time-varying channel," IEEE Trans. Wireless Commun., vol. 8, pp. 4276–4285, Aug. 2009.
%\bibitem{J.G.Proakis} J. G. Proakis, \emph{Digital Communications}, 4th ed. McGraw-Hill, 2001.
\end{thebibliography}
\end{document}